\documentclass[11pt]{article}
\usepackage{amssymb,amsmath, epsfig}
\include{epsf}

\usepackage[hypertex]{hyperref}
\usepackage[amsmath, hyperref, thmmarks]{ntheorem}
\theoremsymbol{}
\theorembodyfont{\slshape}
\theoremheaderfont{\normalfont\bfseries}
\theoremseparator{}

\theorembodyfont{\upshape}
\theoremsymbol{\ensuremath{\blacklozenge}}

\theoremstyle{nonumberplain}
\theoremheaderfont{\scshape}
\theorembodyfont{\normalfont}
\theoremsymbol{\ensuremath{\blacksquare}}

\newtheorem{proof}{Proof}
\qedsymbol{\ensuremath{_\blacksquare}}
\theoremclass{LaTeX}


\usepackage{amsfonts}
\usepackage{amscd}
\usepackage[all]{xy}
\baselineskip 16pt
\oddsidemargin 0pt
\evensidemargin 0pt
\topmargin 0pt
\headheight 0pt
\headsep 0pt
\footskip 32pt
\textheight 40\baselineskip
\advance \textheight by \topskip
\textwidth 470pt
\makeatletter

\@addtoreset{equation}{section}

\newcommand{\cg}[6]{
  \left(
  \begin{array}{cc|c}
  #1 & #3 & #5 \\
  #2 & #4 & #6
  \end{array}
  \right)
}
\makeatother

\newtheorem{lem}{Lemma}

\newtheorem{def-lem}[lem]{Definition/Lemma}

\newcommand{\be}{\begin{equation}}
\newcommand{\ee}{\end{equation}}
\newcommand{\beq}{\begin{equation}}
\newcommand{\eeq}{\end{equation}}
\newcommand{\bea}{\begin{eqnarray}}
\newcommand{\eea}{\end{eqnarray}}
\def\beqa{\begin{eqnarray}}
\def\eeqa{\end{eqnarray}}
\def\nn{\nonumber}
\newcommand{\Rl}{\mathbb{R}^3_\lambda}
\newcommand{\Rt}{\mathbb{R}^4_\theta}
\newcommand{\R}{\mathbb{R}}
\newcommand{\C}{\mathbb{C}}

\newcommand{\eqn}[1]{(\ref{#1})}
\newcommand{\del}{\partial}
\newcommand{\Tr}[1]{\:{\rm Tr}\,#1}
\newcommand{\dd}{{\mathrm d}}
%
\newcounter{lst}
\newenvironment{romanlist}
  {\begin{list}{\roman{lst})\ }{\usecounter{lst}
     \setlength{\labelwidth}{\leftmargin}\setlength{\labelsep}{0pt}}}
  {\end{list}}

  \def\appendix#1{\addtocounter{section}{1}\setcounter{equation}{0}
\renewcommand{\thesection}{\Alph{section}}
\section*{
\thesection\protect\indent \parbox[t]{11.715cm} {#1}}
\addcontentsline{toc}{section}{Appendix\thesection\ \ \ #1} }

\renewenvironment{thebibliography}[1]
         {\section*{References}\frenchspacing\small
          \begin{list}{[\arabic{enumi}]}
         {\usecounter{enumi}\parsep=2pt\topsep 0pt
         \settowidth{\labelwidth}{[#1]}
         \leftmargin=\labelwidth\advance\leftmargin\labelsep
         \rightmargin=0pt\itemsep=1pt\sloppy}}{\end{list}}
\title{Noncommutative field theories on $\mathbb{R}^3_\lambda$: \\
Towards UV/IR mixing freedom}
\author{Patrizia Vitale $^{a,b,c}$ and Jean-Christophe Wallet$^c$}

\begin{document}

\date{}
\maketitle
\vspace*{-1cm}

\begin{center}

\textit{$^a$Dipartimento di Scienze Fisiche
Universit\`a di Napoli Federico II}  \\
\textit{$^b$INFN, Sezione di Napoli, Via Cintia 80126 Napoli, Italy}\\
\smallskip
\textit{$^c$Laboratoire de Physique Th\'eorique, B\^at.\ 210\\
CNRS and Universit\'e Paris-Sud 11,  91405 Orsay Cedex, France}\\
\bigskip
 e-mail:
\texttt{patrizia.vitale@na.infn.it, jean-christophe.wallet@th.u-psud.fr}\\[1ex]

\end{center}

\begin{abstract}
We consider the noncommutative space $\Rl$, a deformation of the algebra of functions on $\R^3$ which yields a ``foliation" of $\mathbb{R}^3$ into fuzzy spheres. We first construct a natural matrix base adapted to $\Rl$.   We then apply this general framework to the one-loop study of a two-parameter family of real-valued scalar noncommutative field theories  with quartic polynomial interaction, which becomes a non-local matrix model when expressed in the above matrix base. The  kinetic operator involves a part related to dynamics on the fuzzy sphere  supplemented by a term reproducing  radial dynamics. We then compute the planar and non-planar 1-loop contributions to the 2-point correlation function. We find that these diagrams are both finite in the matrix base. We find no singularity of IR type, which signals very likely the absence of UV/IR mixing. We also consider the case of a kinetic operator with only the radial part. We find that the resulting theory is finite to all orders in perturbation expansion.
\bigskip

{\bf Keywords:} Noncommutative quantum field theory, UV/IR mixing, Wick-Voros product
\end{abstract}

\pagebreak

\section{Introduction.}

Noncommutative Geometry (NCG) \cite{Connes1} provides a generalization of topology, differential geometry and index theory. The starting idea is to set-up a duality between spaces and associative algebras in a way to obtain an algebraic description of the structural properties of the space, in particular the topological, metric, differential properties \cite{Connes2}. Besides, many of the building blocks of fundamental physics fit well with concepts of NCG which may lead to a more accurate understanding of spacetime at short distance and/or possibly of what could be a quantum theory of gravity. For instance, NCG offers a possible way to treat the physical obstructions to the existence of a continuous space-time and commuting coordinates at the Planck scale \cite{Doplich1}. Once the noncommutative nature of space-time postulated, it is natural to consider  field theories on noncommutative manifolds, the so called Noncommutative Field Theories (NCFT). \par

The first prototypes of NCFT appeared in 1986 within String field theory \cite{witt1}. Field theories defined on the fuzzy sphere, a simple finite dimensional noncommutative geometry \cite{Hoppe, stratonowich, pepejoe2}, were introduced at the beginning of the 90's in \cite{fuzzy1}   and actively studied since then. See for example \cite{fuzzy2} for a review.  In1998 NCFT on the Moyal space, the simplest noncommutative geometry modeled on the phase-space of quantum mechanics, was shown to occur in effective regimes of String theory \cite{Schomerus}. This observation triggered a huge activity.  Noncommutative field theory of Moyal type was also shown to describe quite accurately   quantum Hall physics    \cite{qhe1}. For a review on NCFT on Moyal spaces see  \cite{douglas, szabo} and references therein.

The renormalization study of  NCQFT (Noncommutative Quantum Field Theory)  is in general difficult, a part from the case of finite noncommutative geometries, and is often complicated by the Ultraviolet/Infrared (UV/IR) mixing. This phenomenon occurs  for instance within the simplest noncommutative real-valued $\varphi^4$ model on the 4-dimensional Moyal space, as pointed out and analyzed in \cite{Minwalla}. The phenomenon persists in Moyal-noncommutative gauge models and represents one of the main open problems of Moyal-based field theory.  In  \cite{WAL1} noncommutative differential structures relevant to Moyal-noncommutative gauge theories  were studied, precisely to tackle such a problem.
A first solution for scalar field theory was  proposed in 2003 \cite{Grosse:2003aj}. It  amounts to modify the initial action with a harmonic oscillator term leading to a fully renormalisable NCQFT.
This is the so called Grosse-Wulkenhaar model. Various of its properties have been explored, among which classical and/or geometrical ones \cite{wal-golds},  the $2$-d fermionic extension \cite{V-T} as well as the generalization to gauge theory (matrix) models \cite{gauge-var1}, \cite{GWW}, \cite{gauge-var2},  \cite{wal-rev}. The Grosse-Wulkenhaar model has interesting properties such as the vanishing of the $\beta$-function to all orders \cite{beta} when the action is self-dual under the so-called Langmann-Szabo duality \cite{LSD}. Its $4$-d version is very likely to be non-perturbatively solvable, as shown in \cite{harald-raimar}. Besides, this model together with its gauge theory counterpart seems to be related to an interesting noncommutative structure, namely  a finite volume spectral triple \cite{harmonic-gw} whose relation to the Moyal (metric) geometries has been analyzed in \cite{homot-moyal, LVD}.

This  paper deals with a different kind of noncommutativity,  said of ``Lie algebra" type, because the $\star$-commutator of coordinate functions is not constant and reproduces the Lie bracket of classical Lie algebras. We shall follow  ref. \cite{selene} where many  $\star$-products were proposed, reproducing at the level of coordinate functions all three-dimensional Lie algebras, and in particular ref. \cite{Hammaa}, where the specific case of a $\mathfrak{su}(2)$ based star product giving rise to the noncommutative space $\R^3_\lambda$ was first introduced.
The purpose of this paper is twofold. The first goal is to set-up a general framework that can be used to study the quantum (i.e renormalisability) behaviour of matter NCFT as well as gauge NCFT \cite{undertak1-2} defined on $\Rl$. The second goal is to apply this framework to a class of natural scalar NCQFT on $\Rl$ in order to capture salient information related to its one-loop behaviour. There are important differences between $\Rl$ and the popular Moyal space. First of all,  the $\star$-commutator between the coordinates of $\Rl$ is no longer constant and  the relevant algebra of functions coding the $\Rl$ NCG is equipped with an associative but not {{translation-invariant}} product. Moreover,  the popular tracial property of the Moyal algebra \cite{pepejoe} does not hold true,  which complicates the treatment of the kinetic part of the action.
This difficulty can be handled by using a suitable matrix base. This is one of the results of the paper. We construct a natural matrix base adapted to $\Rl$ which can be obtained as a reduction of the matrix base of the Moyal space $\R^4_\theta$ \cite{pepejoe}\footnote{As we shall see in detail the starting matrix base in $\R^4_\theta$ is actually a slight modification of the Moyal matrix base which was introduced in \cite{pepejoe}. We shall use a matrix base adapted to the Wick-Voros product.}.
We then consider a family of real-valued scalar actions on $\Rl$ with quartic polynomial interactions. The family of kinetic operators, indexed by two real parameters,  involves a natural Laplacian-type operator which contains the square of the angular momentum and an additional term  related (but not equal) to the Casimir operator of $\mathfrak{su}(2)$, which is necessary in order to generate some radial dynamics. When re-expressed in the natural matrix base, the action is the one for a non-local matrix model with {\it{diagonal}} interaction term\footnote{In physical language, the natural base is nothing but the interaction base.} and {\it{non-diagonal}} kinetic operator being of Jacobi type. The action can be split as an infinite sum of scalar actions defined on the successive fuzzy spheres that ``foliate'' the noncommutative space $\Rl$.  The additional term in the Laplacian encodes radial dynamics.

Another matrix base, largely used in the literature on the fuzzy sphere,  is built from the symbols of the fuzzy spherical harmonic operators and   related to the natural base of $\Rl$. This  leaves {\it{diagonal}} the kinetic operator with however a complicated interaction term\footnote{This other base is physically the so-called propagation base. See the previous footnote. We will use this terminology when appropriate in the paper. }. Upon diagonalizing the kinetic term, the computation of the propagator in the  natural  base can then be performed  and we end up with   a tractable expression together with a purely diagonal interaction term.
We then compute the planar and non-planar 1-loop contributions to the 2-point correlation function. We find that they are both finite in the natural matrix base. The computation of the corresponding amplitudes in the propagation base, when only the angular momentum part of the kinetic operator is involved, shows consistency with previous work on the fuzzy sphere \cite{vaidya01,chu01}. We find no IR singularities, which signals very likely the absence of the UV/IR mixing phenomenon at the perturbative level.

We also consider the limit situation where the Laplacian is only given by the term related to the Casimir operator. This leads to a big simplification for the action and  for the general power counting of the ribbon diagrams of arbitrary orders. We find that the resulting theory is finite to all orders in perturbation.\par

The paper is organized as follows. In section \ref{sect2} we summarize the general properties of the noncommutative $\Rl$ that will be used in this paper together with some features related to the Wick-Voros product. In  section \ref{matrixbasesection} we construct a natural matrix base adapted to $\Rl$. In  section \ref{theactions} we construct a family of real-valued scalar actions on $\Rl$ with quartic polynomial interactions. The relationship with the base built from the fuzzy spherical harmonics is also introduced. The subsection \ref{subsect4.3} involves the computation of the propagator expressed in the natural matrix base for $\Rl$, which is rather easily carried out, once a suitable combination of the change of base of subsection \ref{subsect4.2} with properties of fuzzy spherical harmonics is done. Nevertheless, we find interesting to provide in the appendix the general computation of the propagator that takes advantage of the Jacobi nature of the kinetic operator. This is based on the determination of a suitable family of orthogonal polynomials that gives rise to diagonalization. We find in the present case that the relevant orthogonal polynomials are the dual Hahn polynomials, the counterpart of the Meixner polynomials underlying the computation of the Grosse-Wulkenhaar propagator. This, as a by-product, provides explicit relations between fuzzy spherical harmonics, Wigner $3j$-symbols and dual Hahn polynomials.   In section \ref{sect5} we compute and discuss the planar and non-planar 1-loop contribution to the 2-point correlation function, respectively in the subsections \ref{planarsection} and \ref{nonplanarsection}. In the subsection \ref{finitemodels} we consider the  limit case of a kinetic operator with no angular momentum term,  for which we find finitude to all orders in perturbation. We finally summarize the results and conclude.\par

\section{The noncommutative space $\Rl$}\label{sect2}
The noncommutative space $\Rl$ has been first introduced in \cite{Hammaa}. A generalization has been studied in \cite{selene}.  It is  a subalgebra of $\Rt$, the noncommutative algebra of functions on $\R^4\simeq \C^2$ endowed with the Wick-Voros product \cite{Wick-Voros}
\be
\phi\star \psi\, (z_a,\bar z_a)= \phi(z,\bar z) \exp(\theta \overleftarrow\del_{z_a}\overrightarrow\del_{\bar z_a}) \psi(z,\bar z), \,\,\,\, a=1,2 \label{Wick-Vorosprod}
\ee
This is an asymptotic expansion; a proper definition, based on the dequantization map  associated to normal ordered quantization,will be given   in sect. \ref{matbasesubsect1} where it is actually needed to introduce a matrix base.  For coordinate functions we have the $\star$-commutator
\be
[z_a,\bar z_b]_\star= \theta \delta_{ab}
\ee
with $\theta$ a constant, real parameter. Resorting to real coordinates $q_a= z_a+\bar z_a, p_a= i(z_a-\bar z_a),\,a=1,2$  we recover the usual $\star$-commutator of (two copies of) the Moyal plane, the two products differing by symmetric terms.\footnote{The Moyal and Wick-Voros  algebras are isomorphic \cite{equivalence}, there is however a  debate on the physical meaning of the equivalence between them, which we will not address here (see for example \cite{GLV08,GLV09, bala,  scholz}).}
The crucial step  to obtain star products on $\mathcal{F}(\mathbb{R}^3)$, hence to deform $\mathcal{F}(\mathbb{R}^3)$ into a noncommutative algebra, is to identify
$\mathbb{R}^3$ with the dual, $\mathfrak{g}^*$,  of some chosen three dimensional Lie
algebra $\mathfrak{g}$. We choose here to work with the $\mathfrak{su}(2)$ Lie algebra, because of the connection with other results already present in the literature (for example the fuzzy sphere) but other choices can be made. This identification induces on $\mathcal{F}(\mathbb{R}^3)$ the
Kirillov Poisson bracket, which, for coordinate functions reads
\be
\{x_i,x_j\}= c_{ij}^k x_k \label{Kirillov}
\ee
with $i=1,..,3$ and  $c_{ij}^k $ the structure constants of $\mathfrak{su}(2)$.  On the other
hand, it is well known that this  (Poisson) Lie algebra may be regarded as a subalgebra of the
symplectic algebra $\mathfrak{sp}(4)$, which is classically realized  as the Poisson
 algebra of quadratic functions on $\mathbb{R}^4$ ($\mathbb{C}^2$ with
our choices) with canonical Poisson bracket
\be
\{z^a, \bar z^b\}= i .
\ee
Indeed it is possible to find quadratic functions
\be \pi^*(x_i)= \pi^*(x_i)(z^a,\bar z^a)
\ee
 which obey \eqn{Kirillov}.We have indicated with $\pi^*$ the  pull-back map $\pi^*:\mathcal{F}(\R^3)\rightarrow \mathcal{F}(\R^4)$.
This is nothing but the classical counterpart of the Jordan-Schwinger map realization of
Lie algebra generators in terms of creation and annihilation operators \cite{MMVZ94}. Then
one can show  that this Poisson subalgebra is also a Wick-Voros  subalgebra,
that is
\be
\pi^*(x_i)(z^a,\bar z^a)\star \pi^*(x_j)(z^a,\bar z^a) -\pi^*(x_j)(z^a,\bar z^a)\star \pi^*(x_i)(z^a,\bar
z^a)= \lambda c_{ij}^k \pi^*(x_k)(z^a,\bar z^a) \label{starx}
\ee
where the noncommutative parameter $\lambda$  shall be adjusted according to  the physical dimension of the coordinate functions $x_i$.
We shall indicate with $\mathbb{R}^3_\lambda$ the  noncommutative
algebra $(\mathcal{F}(\R^3), \star)$.  Eq. \eqn{starx}  induces a star product on polynomial functions on
$\mathbb{R}^3$ generated by the coordinate functions $x_i$, which may be expressed in
closed form in terms of differential operators on $\mathbb{R}^3$.   Here we will consider quadratic
realizations of the kind
\be
\pi^*(x_\mu)=\frac{\lambda}{\theta} \bar z^a e_\mu^{ab} z^b, \;\;\; \mu=0, ..,3 \label{xmu}
\ee
with $\lambda$ a constant, real parameter of length dimension equal to one;   $e_i= \frac{1}{2}\sigma_i, \: i=1,..,3$ are the $SU(2)$ generators in the defining representation with  $\sigma_i$
 the Pauli matrices, while $e_0=\frac{1}{2} \mathbf{1}$.  We  shall omit the pull-back map from now on,  unless necessary.
 Notice that
\be
x_0=\frac{\lambda}{2\theta}\bar z_a z_a
\ee
commutes with $x_i$ so that we can alternatively define $\R^3_\lambda$ as the commutant of $x_0$; $x_0$ generates the center of the algebra. We also have
\be
x_0^2= \sum_i x_i^2. \label{x02}
\ee
It is easily verified that the induced  $\star$-product reads
\be
\phi\star \psi \,(x)= \exp\left[\frac{\lambda}{2}\left(\delta_{ij} x_0+ i \epsilon_{ij}^k x_k \right)\frac{\del}{\del u_i}\frac{\del}{\del v_j}\right] \phi(u) \psi(v)|_{u=v=x} \label{starsu2}
\ee
which implies, for coordinate functions
\beqa
x_i\star x_j &=& x_i x_j+ \frac{\lambda}{2} \left(x_0 \delta_{ij} + i \epsilon^{ij}_k x^k\right) \label{xistarxj}\\
x_0\star x_i &=& x_i\star x_0 = x_0 x_i + \frac{\lambda}{2} x_i\\
x_0\star x_0&=& = x_0(x_0+\frac{\lambda}{2}) = \sum_i x_i\star x_i - \lambda x_0  \label{x0*2}
 \eeqa
 where Eq.  \eqn{x02} has been used, together with the equality $\sum_i x_i\star x_i= \sum_i x_i^2 + 3/2 \; \lambda \,x_0\, $ descending from Eq. \eqn{xistarxj}. The product is associative, since it is nothing but the Wick-Voros product expressed in different variables.
 As for the $\star$ commutator we have
  \be
 [x_i,x_j]_\star=i \lambda \epsilon_{ij}^k x_k . \label{commsu2}
 \ee
On introducing  the parameter $\kappa=\lambda/\theta$, the commutative limit is achieved with $\lambda, \theta \rightarrow 0$. $\kappa={\rm const}$.

We have thus realized the announced isomorphism between the algebra of linear functions on $\R^3\simeq \mathfrak{su}(2)^*$ endowed with the $\star$ commutator \eqn{commsu2}  and the $\mathfrak{su}(2)$ Lie algebra. Thus,  the algebra $\mathbb{R}^3_\lambda$ can be defined as $\mathbb{R}^3_\lambda=\mathbb{C}[x_\mu]/{\cal{I}}_{{\cal{R}}_1,{\cal{R}}_2 }$, i.e the quotient of the free algebra generated by the coordinate functions $(x_i)_{i=1,2,3}, x_0,$ by the two-sided ideal generated by the relation ${\cal{R}}_1: [x_i,x_j]_\star=i\lambda\epsilon_{ijk}x_k$, together with ${\cal{R}}_2: x_0\star x_0 +\lambda x_0 =\sum_i x_i\star x_i$. Notice that, because of the presence of $x_0$ $\mathbb{R}^3_{\lambda\ne 0}$ {\it is not} isomorphic to  $ {\cal{U}}(\mathfrak{su}(2))$.

For a comparison with the Moyal induced product we refer to \cite{selene}.

\section{\texorpdfstring{The matrix base }{Matrix base }}\label{matrixbasesection}

The matrix base we  shall define  for $\R^3_\lambda$ is obtained through  a suitable reduction of the matrix base of the Wick-Voros algebra $\R^4_\theta$, which was introduced in \cite{discofuzzy}. The latter is in turn a slight modification of the well known matrix base for the Moyal algebra defined in \cite{pepejoe} by  Gracia-Bond\`{\i}a and Varilly.

\subsection{The matrix base for the Wick-Voros $\R_\theta^4$}\label{matbasesubsect1}
 Let us first review the matrix base adapted to the Wick-Voros algebra $\R_\theta^4$. Our convention, all over the paper, will be to use hatted letters to indicate
operators and un-hatted ones to indicate their noncommutative symbols.

It is well known that the Wick-Voros product is introduced through a weighted quantization map which, in two dimensions, associates to functions on the complex plane the operator
\be
\hat \phi=\hat {\mathcal{W}}_V(\phi)=\frac{1}{(2\pi)^2}\int\dd^2 z \,\hat \Omega (z, \bar z)
\phi(z, \bar z) \label{Weylmap}
\ee
where
\be
\hat \Omega (z, \bar z)= \int \dd^2 \eta\, e^{{ -( \eta \bar z-\bar\eta z )}} e^{
\theta\eta a^\dag} e^{-\theta\bar\eta a }
\ee
is the so called quantizer and $a, a^\dag$ are the usual (configuration space) creation
and annihilation operators, with commutation relations
\be
[a, a^\dag]=\theta. \label{commz}
\ee
A word of caution is in order, concerning the domain and the range of the weighted Weyl map in Eq. \eqn{Weylmap}.  While the standard Weyl map maps Schwarzian functions into Hilbert
Schmidt operators, for the weighted Weyl map \eqn{Weylmap} this is not always the case. An  exhaustive analysis is lacking in the literature, up to our knowledge. Explicit counterexamples are discussed in \cite{discofuzzy}.

The inverse map which is the analogue of the Wigner map  is represented by:
\be
\phi(z,\bar z) = \mathcal{W}_V^{-1}(\hat \phi)= \langle z|\hat \phi|z\rangle \label{wigner}
\ee
with $|z\rangle$ the {\it coherent states} defined by $a|z\rangle=z|z\rangle$. Notice that, differently from the Weyl-Wigner-Moyal case, the quantizer and dequantizer operators do not coincide,
meaning that this
quantization/dequantization procedure  is not self-dual  (see \cite{pepejoe,convolodya} for
details).

The Wick-Voros product, whose asymptotic form has been already given in \eqn{Wick-Vorosprod}, is then defined as
\be
\phi\star \psi := \mathcal{W}_V^{-1}\left(\hat{\mathcal{W}}_V(\phi)\hat {\mathcal{W}}_V(\psi)\right)= \langle z|\hat \phi\,\hat \psi|z\rangle
\label{Wick-Voros}
\ee
It can be seen that, for analytic functions, a very convenient way to reformulate the quantization map \eqn{Weylmap} is to consider their analytic expansion
\be
\phi(\bar z,z)=\sum_ {pq}\tilde\phi_{pq} \bar z^p z^q\, ,\;\; p,q\in \mathbb{N}  \label{anexp}
\ee
with $\tilde\phi_{pq}\in \C$.
The quantization map \eqn{Weylmap} will then produce   the normal ordered operator
\be
\hat \phi =\hat {\mathcal{W}}_V(\phi)=\sum_ {pq}\tilde\phi_{pq}  a^{\dag p} a^q \label{normord}
\ee
We will therefore assume analyticity all over the paper and use \eqn{anexp}, \eqn{normord}.

This construction may be easily extended to $\R^{2n}$. We will consider $n=2$ from now on.

Each $a_a$, $a=1,2$, acts on ${\cal{H}}_0\cong \ell^2(\mathbb{N})$, a copy of the Hilbert space of the one dimensional harmonic oscillator with canonical orthonormal base $(|n\rangle)_{n\in\mathbb{N}}$. We set
\be
|N\rangle=|n_1,n_2\rangle:=|n_1\rangle\otimes|n_2\rangle \label{numberbase}
\ee
 the canonical orthonormal base for ${\cal{H}}={\cal{H}}_0\otimes{\cal{H}}_0$, also called in the physics literature the number base. The action of the
$a_a,a_a^\dag$'s on ${\cal{H}}$ is given by
\beqa
a_1|n_1,n_2\rangle&=&{\sqrt{\theta}}{\sqrt{n_1}}|n_1-1,n_2\rangle,\ {{a_1^\dag}}|n\rangle={\sqrt{\theta}}{\sqrt{n_1+1}}|n_1+1,n_2\rangle,\nn\\
 a_2|n_1,n_2\rangle&=&{\sqrt{\theta}}{\sqrt{n_2}}|n_1,n_2-1\rangle,\ {{a_2^\dag}}|n\rangle={\sqrt{\theta}}{\sqrt{n_2+1}}|n_1,n_2+1\rangle\label{oscharmonic}.
\eeqa
For further use, we also define for any $a=1,2$ the number operators $N_a={{a}}_a^\dag a_a$ satisfying
\be
N_a|n\rangle=\theta n|n\rangle\;\;\;\forall |n\rangle\in{\cal{H}}_0.
\ee
To functions on $\R^4$ we associate via the quantization map \eqn{normord} normal ordered operators.
On using the number base \eqn{numberbase} together with \eqn{oscharmonic} we may rewrite \eqn{normord} as
\be
\hat\phi=\sum_{P,Q\in\mathbb{N}^2}\phi_{PQ}|P\rangle \langle Q| \;\;\; \phi_{MN}\in \C
\ee
with $\tilde\phi_{PQ}, \phi_{LK}$ related by a change of base.
 We have indeed
\begin{equation}
|P\rangle={{{{a}}_1^{\dag p_1}{{a}}_2^{^\dag p_2}}\over{[P!\theta^{|P|}]^{1/2} }}|0\rangle,\;\;  \forall P=(p_1,p_2)\in\mathbb{N}^2\label{matrixbase0},
\end{equation}
with $P!:=p_1!p_2!,\ |P|:=p_1+p_2,$ and $|0\rangle=|0,0\rangle$  a Fock vacuum state. This implies
\be
\langle z_1,z_2|P\rangle=\overline{\langle P|z_1,z_2\rangle}=e^{-\frac{ \bar z_1 z_1+\bar z_2 z_2}{2\theta}}\frac{\bar z_1^{p_1} \bar z_2^{p_2}}{P! \theta^{|P|}}. \label{zn}
\ee
Thus
\be
\phi_{LK}=\sum_{q_1=0}^{min\,(l_1,k_1)}  \sum_{q_2=0}^{min\,(l_2,k_2)} \tilde \phi_{l_2-q_2,k_2-q_2} \frac{\sqrt{L! K! \theta^{|L|+|K|}}}{\theta^{|Q|} Q!}.\label{changeofbase}
\ee
On applying the dequantization map we obtain a function in the noncommutative Wick-Voros algebra
\be
\phi (z,\bar z)=\sum_{P Q} \phi_{PQ } f_{P Q} (z,\bar z)
\ee
with
\be
f_{PQ}(z,\bar z)= \langle z_1,z_2|\hat f_{PQ}|z_1,z_2\rangle=
\frac{e^{-\frac{ \bar z_1 z_1+\bar z_2 z_2}{\theta}}}{\sqrt{P!Q!\theta^{|P+Q|}}} \bar z_1^{p_1}\bar z_2^{p_2}  z_1^{q_1} z_2^{q_2}  \label{wigner functions}
\ee
and we have introduced the notation $\hat f_{PQ}= |P\rangle\langle Q|$. Here we have used \eqn{zn}.

 The base operators $\hat f_{PQ}$
 fulfill the following fusion rule, approximation of the identity  and trace property  respectively given by:
\begin{equation}
\hat f_{MN}\hat f_{PQ}=\delta_{NP}\hat f_{MQ},\ \mathbf{1}=\sum_{M\in\mathbb{N}^2}|M\rangle\langle M|,\ Tr(\hat f_{MN})=\delta_{MN}, \forall M,N,P,Q\in\mathbb{N}^2\label{properties1}.
\end{equation}
These properties descend to the symbol functions of the base $\hat f_{MN}$, defining
 an orthogonal matrix base for $\R^4_\theta$ with a simple rule for the star product
\be
f_{MN}\star f_{PQ}(z,\bar z)= \langle z_1,z_2| \hat f_{MN}\hat f_{PQ}|z_1,z_2\rangle =\delta_{NP} f_{MQ}(z,\bar z)
\ee
and analogous expressions for the other relations. In particular, on using the decomposition of the identity in terms of coherent states
\be
\mathbf{1}=\frac{1}{(\pi\theta)^2}\int d^2z_1 d^2 z_2 |z_1,z_2\rangle\langle z_1,z_2|
\ee
and the last identity in \eqn{properties1}, we arrive at
\be
\int \dd^2 z_1 \dd^2 z_2\, f_{PQ}(z,\bar z)= (\pi \theta)^2 \delta_{PQ}.
\ee
The same result can be obtained by direct calculation on using \eqn{Wick-Vorosprod} and \eqn{wigner functions}.

The existence of an orthogonal matrix base for $\R^4_\theta$ allows to rewrite the Wick-Voros product in \eqn{Wick-Vorosprod} as a matrix multiplication. To this we introduce the notation $\Phi:=\{\phi_{PK}\}_{P,K\in \mathbb{N}^2}$ for the infinite matrix whose entries are the fields components. We have then
\be
\phi\star\psi\,(z,\bar z)= \sum_{PQ}\sum_{LK} \phi_{PQ}\psi_{LK} (f_{PQ}\star f_{LK})(z, \bar z)= \sum_{PK}(\Phi\cdot \Psi)_{PK} f_{PK}(z, \bar z)
\ee
with $( \;\cdot \;)$ the matrix product.
As already noticed, this is a slight modification, adapted to the Wick-Voros product, of the matrix base defined in \cite{pepejoe} for the Moyal algebra. It was introduced in \cite{discofuzzy}  and already used in \cite{conadrian} in the context of renormalizable scalar field theories on the Wick-Voros plane. In the context of quantum field theories approximated with fuzzy geometries the Wick-Voros base has been recently used in \cite{spisso}.

\subsection{The matrix base of $\mathbb{R}^3_\lambda$}\label{subsec.3.2}
In order to obtain a matrix base in three dimensions, compatible with the product
\eqn{starsu2}, we resort to the Schwinger-Jordan realization of the $\mathfrak{su}(2)$
Lie algebra in terms of creation and annihilation operators, which was given in \eqn{xmu}.  The derivation is identical to the one performed in \cite{rosa12} except for the fact that the starting point, the matrix base on $\R^4_\theta$ is here the Wick-Voros one.

It is known that  ${\cal{H}}={\cal{H}}_0\otimes{\cal{H}}_0$ admits the natural decomposition ${\cal{H}}=\bigoplus_{ j\in{{\mathbb{N} }\over{2}}  }{\cal{V}}_j$ where
\begin{equation}
{\cal{V}}_j=span\{ |j,m\rangle\}_{-j\le m\le j},\ |j,m\rangle:=|j+m\rangle\otimes| j-m\rangle\label{irrepssu2}
\end{equation}
is the linear space carrying the irreducible representation of $SU(2)$ with dimension $2j+1$ and for any $j\in{{\mathbb{N} }\over{2}} $, the system $\{ |j,m\rangle\}_{-j\le m\le j}$ is orthonormal. From this, it can be realized that another natural base for $\mathbb{R}^4_\theta$ is provided by
\begin{equation}
\{\hat v^{j\tilde\jmath}_{m\tilde m}:=|j,m\rangle\langle \tilde\jmath,\tilde m|\},\ j,\tilde\jmath\in{{\mathbb{N}}\over{2}},\ -j\le m\le j\, , -\tilde\jmath\le \tilde m\le \tilde\jmath \label{2sphericbase}\, .
\end{equation}
The two bases are related as follows.
We observe that the
eigenvalues of the number operators $\hat N_1=a^\dag_1 a_1$, $\hat N_2=a^\dag_2 a_2$, say  $p_1,
p_2$, are related to the eigenvalues of $\hat{\mathbf{X}}^2, \hat X_3$, respectively $j(j+1)$ and $m$,   by
\be
p_1+p_2=2j\;\;\; p_1-p_2=2m
\ee
with $p_i\in \mathbb{N}$, $j\in \mathbb{N}/2$, $-j\le m \le j$,
so to have
\be
|p_1 p_2\rangle=|j+m, j-m\rangle\equiv |j m\rangle=
\frac{(a_1^\dag)^{j+m}(a_2^\dag)^{j-m}}{\sqrt{(j+m)!(j-m)!\theta^{2j}}} |00\rangle
\ee
where $\hat X_i, i=1,..,3$ are the standard angular momentum operators  representing the $\mathfrak{su}(2)$ Lie algebra in terms of selfadjoint operators on the Hilbert space $ \mathcal{V}_j$ spanned by $|j,m\rangle$.
Thus we may identify
\be
\hat f_{MN}\rightarrow \hat v^{j \tilde\jmath}_{m  \tilde m},
\ee
and, for their symbols
\be
f_{MN}(z,\bar z)\rightarrow  v^{j \tilde\jmath}_{m  \tilde m} (z,\bar z)=\langle z_1,z_2 |\hat v^{j \tilde\jmath}_{m  \tilde m}| z_1,z_2\rangle
\ee
so to have, for $\phi\in\R^4_\theta$
\be
\phi(z_a,\bar z_a)= \sum_{{j}, {\tilde\jmath}\in \mathbb{N}/2}\sum_{{m}=-{j}}^{j} \sum_{ {\tilde m}=-{\tilde\jmath}}^{\tilde\jmath}
\phi^{{j} {\tilde\jmath}}_{{m} {\tilde m}} v^{{j}{\tilde\jmath}}_{{m}{\tilde m}}(z_a,\bar z_a)
\ee
We further observe that,  for  $\phi (z,\bar z)$ to be in the subalgebra
$\mathbb{R}^3_\lambda$ we must impose $j=\tilde\jmath$. To this it suffices to
compute
\be
x_0 \star v^{j\tilde\jmath}_{m\tilde m} (z,\bar z)-v^{j\tilde\jmath}_{m\tilde m}\star
x_0 (z,\bar z)=\lambda(j-\tilde\jmath) v^{j\tilde\jmath}_{m\tilde m}
\ee
with $x_0(z,\bar z)=\lambda/(2\theta) \bar z_a z_a $ and the $\star$ product defined in \eqn{starsu2}. Then we recall  that $\mathbb{R}_\lambda^3$  may be alternatively defined as
the $\star$-commutant of $x_0$. This imposes
\be
j=\tilde \jmath
\ee
We have then
\be
\phi(x_i, x_0)=\sum_{j}\sum_{m,\tilde m=-j}^j \phi^j_{m\tilde m} v^j_{m\tilde m}
\label{phixi}
\ee
with
\be
v^j_{m\tilde m}:= v^{jj}_{m\tilde m}= e^{-\frac{\bar z_a z_a}{\theta}}\frac{\bar z_1^{j+m}
z_1^{j+\tilde m} \bar z_2^{j-m}
z_2^{j-\tilde m}}{\sqrt{(j+m)!(j-m)! (j+\tilde m)!(j-\tilde m)! \theta^{4j} }} \label{matrixbase}
\ee
and we recall  its expression in terms of the dequantization map
\be
v^j_{m\tilde m}(z,\bar z)=\langle z_1,z_2|j\,m\rangle\langle j\,\tilde m|z_1,z_2\rangle . \label{spinwignerfunctions}
\ee
As for the normalization we have
\be
\int \dd^2 z_1\dd^2 z_2 \;v^j_{m \tilde m} (z, \bar z)=  \pi^2 \theta^2 \delta_{m
\tilde m}.
\ee
Let us notice that the base elements $v^j_{m\tilde m}(z,\bar z)$ can be reexpressed solely in terms of the coordinate functions $x_i, x_0$ although the expression is not unique. A possible choice is
\be
v^j_{m\tilde m}(x_i,x_0)= \frac{ e^{-2\frac{ x_0}{\lambda}}}{ \lambda^{2j}}  \frac{(x_0+x_3)^{j+m}
(x_0-x_3)^{j-\tilde m}\; (x_1-i x_2)^{\tilde m -m}
}{\sqrt{(j+m)!(j-m)! (j+\tilde m)!(j-\tilde m)! }}.
\ee
The star product acquires the simple form
\be
v^j_{m\tilde m}\star v^{\tilde\jmath}_{n \tilde n}=\delta^{j \tilde\jmath}\delta_{\tilde m
n}v^j_{m \tilde n} \label{simplform}
\ee
which implies the orthogonality property
\be
\int
v^j_{m\tilde m}\star v^{\tilde\jmath}_{n \tilde n}=\pi^2 \theta^2\delta^{j \tilde\jmath}\delta_{\tilde m
n}\delta_{m \tilde n}. \label{ortov}
\ee
These properties may be either directly verified or derived from the dequantization map starting from the operator relations
\begin{equation}
\hat v^{j_1}_{m_1,m_2}\hat v^{j_2}_{n_1,n_2} =\delta_{j_1j_2}\delta_{m_2n_1}\hat v^{j_1}_{m_1,n_2},\ {{(\hat v^{j}_{m_1,m_2})^\dag}}=\hat v^{j}_{m_2,m_1}
\label{properties2a},
\end{equation}
\begin{equation}
\langle \hat v^{j_1}_{m_1,m_2},\hat v^{j_2}_{n_1,n_2} \rangle=\delta_{j_1j_2}\delta_{m_1n_1}\delta_{m_2n_2}\;\;\; \mathbf{1}=\sum_{j\in {{\mathbb{N}}\over{2}} }\sum_{m=-j}^j \hat v^j_{mm},\ \Tr(\hat v^j_{m_1,m_2})=\delta_{m_1m_2}\label{properties2b}
\end{equation}
where $\langle \hat v^{j_1}_{m_1,m_2},\hat v^{j_2}_{n_1,n_2}\rangle :=\Tr (\hat v^{j_1}_{m_1,m_2})^\dag \hat v^{j_2}_{n_1,n_2}$ is the scalar product.

Notice that, for any ${j\in { {{ {\mathbb{N}} }\over{2} }}}$, the set $\{\hat v^j_{m_1,m_2}\},\ -j\le m_1,m_2\le j$ of $(2j+1)^2$ linear maps $v^j_{m_1,m_2}:{\cal{V}}^j\to{\cal{V}}^j$ simply describes the canonical base of the algebra of endomorphisms of ${\cal{V}}^j$, orthonormal with respect to the scalar product introduced above.
 From this it follows that the direct sum decomposition
\begin{equation}
\mathbb{R}^3_\lambda\simeq\bigoplus_{j\in { {{ {\mathbb{N}} }\over{2} }}}End({\cal{V}}^j) \simeq\bigoplus_{j\in { {{ {\mathbb{N}} }\over{2} }}}\mathbb{S}^j\label{sumfuzzy}
\end{equation}
holds true, where $End({\cal{V}}^j)$ denotes the algebra of endomorphisms of ${\cal{V}}^j$, $\forall {j\in { {{ {\mathbb{N}} }\over{2} }}}$, which actually describe the so-called fuzzy spheres of different radii, ${\mathbb{S}}^j$.

The star product in $\mathbb{R}_\lambda^3$ becomes a block-diagonal matrix product
\beqa
\phi\star \psi (x_i,x_0)&=&\sum \phi^{j}_{m_1\tilde m_1} \psi^{j}_{m_2\tilde m_2}
v^{j}_{m_1\tilde m_1} \star v^{j}_{m_2\tilde m_2} \; =\sum \phi^{j}_{m_1\tilde m_1}
\psi^{j}_{m_2\tilde m_2} v^{j}_{m_1\tilde m_2} \delta_{\tilde m_1 m_2}\nn\\
&=& \sum_{j,m_1, \tilde m_2} (\Phi^j\cdot \Psi^j)_{m_1 \tilde m_2} v^j_{m_1 \tilde m_2}
\label{starmatrix}
\eeqa
where the infinite matrix $\Phi$ gets rearranged into a block-diagonal form, each block being the $(2j+1)\times (2j+1)$ matrix   $\Phi^j=\{\phi^j_{mn}\}, \, -j\le m,n\le j$.
The integral may be defined through the pullback to $\R^4_\theta$
\be
\int_{\mathbb{R}^3_\lambda} \phi
:=\frac{\kappa^3}{\pi^2\theta^2} \int_{\mathbb{R}^4_\theta}\pi^\star(\phi)=  \kappa^3 \sum_j \Tr_j \Phi^j
\ee
with $\Tr_j$ the trace in the $(2j+1)\times(2j+1)$ subspace\footnote{
If we were to perform our analysis in the coordinate base, without recurring to the matrix base, we  should use a differential calculus adapted to   $\R^3_\lambda$ as the one introduced in  \cite{diffcalc} .}.
We have also
\be
\int_{\mathbb{R}^3_\lambda} \phi \star \psi
:=\frac{\kappa^3}{\pi^2\theta^2} \int_{\mathbb{R}^4_\theta}\pi^\star(\phi)\star \pi^*(\psi)= \kappa^3 \sum_j \Tr_j \Phi^j \Psi^j.
\ee

\section{The scalar actions}\label{theactions}
In this section we consider a family of scalar field theories  on $\R^3_\lambda$
indexed by two real parameters $\alpha,\beta$. We assume the fields $\phi \in R^3_\lambda$ to be real. Upon rewriting the action in the matrix base we perform one loop calculations and discuss the divergences. Some comments on the renormalization of the theory are given at the end.
\subsection{General properties}\label{subsect4.1}
Let
\be
S[\phi]=\int  \phi\star(\Delta+\mu^2) \phi + \frac{g}{4!}\phi\star\phi\star\phi\star\phi \label{action}
\ee
where $\Delta$ is the Laplacian defined as
\be
\Delta \phi = \alpha \sum_i D_i^2\phi + \frac{\beta}{\kappa^4}x_0\star x_0\star\phi\label{lapl}
\ee
and
$D_i=\kappa^{-2}[x_i, \; \cdot\;]_
\star, \; i=1,..,3$ are inner derivations of $\R^3_\lambda$. The mass dimensions are $[\phi]=\frac{1}{2},\, [g]=1, \, [D_i]=1$. $\alpha$ and $\beta$ are dimensionless parameters.

The second term in the Laplacian has been added in order to introduce radial dynamics. From \eqn{starsu2} we have indeed
\be
[x_i,\phi]_\star= - i\lambda \epsilon_{ijk}x_j \del_k \phi
\ee
so that the first term,  that is $[x_i,[x_i,\phi]_\star]\star$ can only reproduce tangent   dynamics on fuzzy spheres; this is indeed the Laplacian usually introduced for quantum field theories on the fuzzy sphere.  Whereas
\be
x_0\star \phi = x_0 \phi + \frac{\lambda}{2} x_i \del_i \phi \label{radialgenerat}
\ee
 contains the dilation operator in the radial direction.

Therefore, the highest derivative term of the Laplacian defined in \eqn{lapl} can be made into the ordinary Laplacian on $\R^3$ multiplied by $x_0^2$,  for the parameters $\alpha$ and $\beta$ appropriately  chosen.
We have indeed
\beqa
\sum_i [x_i,[x_i,\phi]_\star]_\star&=& \lambda^2 \bigl[x^i\del_i (x^j\del_j\phi+x^i\del_i \phi)\bigr] -\lambda^2 x_0^2\del^2\phi
\\
 x_0\star x_0\star\phi +\frac{\lambda}{2} x_0\star\phi &=& \frac{\lambda^2}{4}\bigr(x^i\del_i (x^j\del_j\phi)+ x^i\del_i\phi\bigr) +\lambda x_0(x^i\del_i\phi+\phi) +x_0^2\phi  \label{radialact}
\eeqa
where,  in order to have homogeneous terms in the noncommutative parameter, we have added to the radial part the optional contribution  \be \frac{\lambda}{2} x_0\star \phi \label{addition}.\ee
With this choice,  and
 $\alpha/\beta=-1/4$,  we obtain a term proportional to the ordinary Laplacian, multiplied by $x_0^2$, plus lower derivatives.

The term   \eqn{addition} is not  relevant for our subsequent analysis, therefore we will  ignore it in the rest of the paper, as it only produces a shift in the spectrum of the radial operator from $j^2$ to $j(j+1/2)$.  Nor it is really relevant for the homogeneity in $\lambda$  of the various terms of the Laplacian coming form \eqn{radialact}: we shall see  below, on expanding the noncommutative field $\phi$ in the matrix base, that the different order in $\lambda$ of the various terms in \eqn{radialact} is only  a fictitious one, which does not take into account the dependence on the noncommutative parameter of the field itself . Indeed, as will be clear   from  Eqs. \eqn{x+x-}-\eqn{x0x0},  the whole term $\Delta \phi$ is of order $\lambda^2$ . For simplicity, in the rest of the paper we restrict the analysis to $\alpha,\beta$ positive, which is a sufficient condition for the spectrum to be positive.

A rigorous analysis of the commutative limit should be performed  in terms of observables and correlation functions. We have not addressed this issue in the present work and plan to study it elsewhere, in connection with the problem of introducing a Laplacian operator without the rescaling factor $x_0^2$. This is an interesting point; our Laplacian is a natural one for $\R^3_\lambda$: it is constructed
in terms of derivations of the algebra supplemented by  multiplicative operators. We signal the reference \cite{presnajder} where a different Laplacian is proposed  for $\R^3_\lambda$ in the context of noncommutative quantum mechanics, to study the hydrogen atom. It would be interesting to apply such proposal to QFT. However, that operator is not based on derivations of $\R^3_\lambda$. There
might be other candidates; this issue is under investigation. 

To rewrite the action in the matrix base $\{v^j_{m\tilde m}\}$ we first express the coordinate functions in such a base.
On using the expression of the generators in terms of $\bar z_a, z_a$, \eqn{xmu} and the base transformations \eqn{changeofbase} we find
\beqa
x_+ &=&\frac{\lambda}{\theta}\bar z_1 z_2 = \lambda \sum_{j,m}\sqrt{(j+m)(j-m+1)} v^j_{m\,m-1}\\
x_- &=& \frac{\lambda}{\theta}\bar z_2 z_1 = \lambda \sum_{j,m}\sqrt{(j-m)(j+m+1)} v^j_{m\,m+1}\\
x_3 &=&\frac{\lambda}{2\theta} (\bar z_1 z_1-\bar z_2 z_2)=\lambda \sum_{j,m} m v^j_{m\,m}\\
x_0 &=&\frac{\lambda}{2\theta} (\bar z_1 z_1+\bar z_2 z_2)=\lambda \sum_{j,m} j v^j_{m\,m}
\eeqa
were we have introduced
\be
x_\pm:= x_1\pm i x_2.
\ee
Thus we compute
\be \label{x0v}
\begin{array}{lll}
x_+\star v^j_{m\tilde m}= \lambda \sqrt{(j+m+1)(j-m)} v^j_{m+1 \, \tilde m}  & & v^j_{m\tilde m}\star x_+= \lambda\sqrt{(j-\tilde m+1)(j+\tilde m)} v^j_{m\, \tilde m -1} \\
x_-\star v^j_{m\tilde m}= \lambda \sqrt{(j-m+1)(j+m)} v^j_{m-1\,\tilde m}  &&v^j_{m\tilde m}\star x_-= \lambda\sqrt{(j+\tilde m+1)(j-\tilde m)} v^j_{m \,\tilde m +1} \\
x_3\star v^j_{m\tilde m}= \lambda\, m \, v^j_{m\tilde m}&&  v^j_{m\tilde m} \star x_3= \lambda \,\tilde m \, v^j_{m\tilde m}\\
x_0\star v^j_{m\tilde m}= \lambda\, j \, v^j_{m\tilde m}&&  v^j_{m\tilde m} \star x_0= \lambda\, j \,v^j_{m\tilde m}
\end{array}
 \ee
 which yield
\beqa
[x_+,[x_-, v^j_{m\tilde m}]_\star]_\star &=& \lambda^2 \Bigl\{ \Bigl( (j+m)(j-m+1)+ (j+\tilde m+1) (j-\tilde m) \Bigr) v^j_{m\tilde m} +  \nn\\
&-&\sqrt{  (j+m)(j-m+1)(j+\tilde m) ( j-\tilde m +1) } v^j_{m-1 \tilde m-1} +\nn\\
& -& \sqrt{  (j+m+1)(j-m)( j+\tilde m+1) ( j-\tilde m ) } v^j_{m+1 \tilde m+1}  \Bigr\} \label{x+x-}\\
{[}x_-,[x_+, v^j_{m\tilde m}]_\star]_\star &=& \lambda^2 \Bigl\{ \Bigl( (j+m+1)(j-m)+ (j+\tilde m) (j-\tilde m+1) \Bigr) v^j_{m\tilde m} +  \nn\\
&-&\sqrt{  (j+m)(j-m+1)( j+\tilde m) ( j-\tilde m+1 ) } v^j_{m-1 \tilde m-1}  +\nn\\
& -& \sqrt{  (j+m+1)(j-m)(j+\tilde m+1) ( j-\tilde m) } v^j_{m+1 \tilde m+1}     \Bigr\} \label{x-x+}\\
{[}x_3,[x_3, v^j_{m\tilde m}]_\star]_\star&=& \lambda^2 (m-\tilde m)^2 v^j_{m\tilde m}\nn\\
x_0\star x_0 \star v^j_{m\tilde m}&=& \lambda^2 j^2 v^j_{m\tilde m} \label{x0x0}
\eeqa
These relations allow to compute
\beqa
\Delta(\alpha,\beta) v^j_{m\tilde m}&=& \frac{\alpha}{\kappa^4} \Bigl( \frac{1}{2}( [x_+,[x_-, v^j_{m\tilde m}]_\star]_\star + [x_-,[x_+, v^j_{m\tilde m}]_\star]_\star )+ x_3,[x_3, v^j_{m\tilde m}]_\star]_\star \Bigr)\nn\\
&+&\frac{\beta}{\kappa^4} x_0\star x_0 \star v^j_{m\tilde m}.
\eeqa

On using the expansion of the fields in the matrix base and the multiplication rule for the base elements, already introduced in the previous section, respectively
\be
\phi=\sum_{j,m\tilde m} \phi^j_{m\tilde m} v^j_{m \tilde m} \label{phican}
\ee
and
\be
v^j_{m\tilde m}\star v^{\tilde\jmath}_{n\tilde n} = \delta^{j\tilde\jmath} \delta_{\tilde m n}v^j_{m\tilde n}
\ee
we obtain the
action  in \eqn{action} as a matrix model action
\beqa
S[\phi]&=&  \kappa^3\big\{\sum \phi^{j_1}_{m_1\tilde m_1}\bigl(\Delta(\alpha,\beta)+\mu^2\mathbf{1}\bigr)^{j_1 j_2}_{m_1 \tilde m_1; m_2\tilde m_2} \phi^{j_2}_{m_2\tilde m_2} + {{g}\over{4!}}\sum \phi^{j_1}_{mn} \phi^{j_2}_{np} \phi^{j_3}_{pq} \phi^{j_4}_{qm} \delta_{j_1 j_2} \delta_{j_2 j_3} \delta_{j_3 j_4}\bigr\}\nn\\
&=&
\kappa^3\big\{\Tr(\Phi(\Delta(\alpha,\beta)+\mu^2\mathbf{1})\Phi)+{{g}\over{4!}}\Tr(\Phi\Phi\Phi\Phi)\big\}
\label{action1}
\eeqa
where sums are understood over all the indices and $\Tr:=\sum_j \Tr_j$.  The matrix elements of the identity operator are
\be
\mathbf{1}^{j_1 j_2}_{m_1\tilde m_1 m_2\tilde m_2}=  \delta^{j_1 j_2} \delta_{\tilde m_1 m_2} \delta_{m_1\tilde m_2}
\ee
The kinetic operator may be verified  to be
\beqa
(\Delta(\alpha,\beta)+\mu^2\mathbf{1})^{j_1 j_2}_{m_1\tilde m_1;m_2\tilde m_2}&:=& \frac{1}{\pi^2\theta^2} \int v^{j_1}_{m_1 \tilde m_1}\star( \Delta (\alpha,\beta)+\mu^2\mathbf{1}) v^{j_2}_{m_2\tilde m_2}
\nn\\
&=& \frac{\lambda^2}{\kappa^4}\delta^{j_1 j_2}\bigl\{
\delta_{\tilde m_1 m_2}\delta_{m_1 \tilde m_2} D^{j_2}_{m_2\tilde m_2}-\delta_{\tilde m_1, m_2+1}\delta_{m_1, \tilde m_2+1}B^{j_2}_{m_2,\tilde m_2} \nn\\
&-&\delta_{\tilde m_1, m_2-1}\delta_{m_1, \tilde m_2-1} H^{j_2}_{m_2,\tilde m_2}\,\bigr\}
\label{kineticterm}
\eeqa
with
\beqa
D^j_{m_2\tilde m_2}&=&[({2\alpha}+{\beta})j^2 +  {2\alpha}(j_2- m_2 \tilde m_2) ] +\lambda^2{\mu^2} \\
B^j_{m_2\tilde m_2}&=& \alpha \sqrt{(j+m_2+1)(j-m_2)(j+\tilde m_2+1)(j-\tilde m_2)}\\
H^j_{m_2\tilde m_2}&=&  \alpha\sqrt{(j+m_2)(j-m_2+1)(j+\tilde m_2)(j-\tilde m_2+1)}.
\eeqa
At this stage, some comments are in order.
\begin{romanlist}
\item The use of the matrix base for $\R^3_\lambda$ yields an interaction term which is diagonal (i.e a simple trace of product of matrices built from the coefficients of the fields expansion) whereas the kinetic term is not diagonal. Had we used the expansion of $\phi$ in the so called fuzzy harmonics base $(Y^j_{lk})$, $j\in{{\mathbb{N} }\over{2}}$, $l\in\mathbb{N}$, $0\le l\le 2j$, $-l\le k\le l$ (see below),  then we would have obtained a diagonal kinetic term with a complicated interaction term. Notice that this remark holds for any polynomial interaction term. We will come back to this point in a while.
 \item We observe that the action \eqref{action1} is expressed as an infinite sum of contributions, namely $S[\Phi]=\sum_{ j\in{{\mathbb{N} }\over{2}}}S^{(j)}[\Phi]$,
where the  expression for $S^{(j)}$ can be read off from \eqref{action1} and describes a scalar action on the fuzzy sphere $\mathbb{S}^j\simeq End({\cal{V}}^j)$.
 \end{romanlist}
A lot of information about the short and long distance behaviour of a matrix model with diagonal interaction term, regarding renormalization properties, is encoded
into the propagator. The computation of this latter amounts to the determination of the inverse of the kinetic operator in Eq. \eqn{action1} operator  which, because of  the remark ii) above, is  expressible into a block diagonal form. Explicitly
\begin{equation}
S_{Kin}[\Phi]=\kappa^3\sum_{j} \sum_{m,\tilde m} \phi^{j_1}_{m_1 \tilde m_1}(\Delta+\mu^2\mathbf{1})^{j_1 j_2} _{m_1\tilde m_1;m_2\tilde m_2}\phi^{j_2}_{m_2\tilde m_2}\label{kinetic1},
\end{equation}
with  $\Delta^{j_1 j_2} _{m_1\tilde m_1;m_2\tilde m_2}$ defined in \eqn{kineticterm}. Since the mass term is diagonal, let us put it to zero for the moment. We shall restore it at the end.
 One has the following ''law of indices conservation''
\begin{equation}
\Delta^{j_1 j_2}_{mn;kl}\ne 0 \implies
    j_1=j_2,\;\;\;  m+k= n+l\label{consindice1}
\end{equation}
We denote by $P^{j_1 j_2}_{mn;kl}(\alpha,\beta)$ the inverse of $\Delta^{j_1 j_2}_{mn;kl}(\alpha,\beta)$ which is defined by
\begin{equation}
\sum_{k,l=-j_2}^{j_2}\Delta^{j_1 j_2}_{mn;lk}P^{j_2 j_3}_{lk;rs}=\delta^{j_1 j_3}\delta_{ms}\delta_{nr},\ \sum_{m,n=-j_2}^{j_2}P^{j_1 j_2}_{rs;mn}\Delta^{j_2 j_3}_{mn;kl}=\delta^{j_1 j_3}\delta_{rl}\delta_{sk}\label{definvers},
\end{equation}
for which the law of indices conservation still holds true as
\begin{equation}
P^{j_1 j_2}_{mn;kl}\ne0\implies     j_1=j_2\;\;\;m+k= n+l\label{consindice2}.
\end{equation}
To determine $P^{j_1 j_2}_{mn;kl}$ one has to diagonalize $\Delta^{j_1 j_2}_{mn;kl}$. This can be done by direct calculation,  as in  \cite{Grosse:2003aj} for the case of noncommutative scalar theories on the Moyal spaces $\mathbb{R}^{2n}_\theta$, by first  exploiting the implications of \eqref{consindice1}, \eqref{consindice2} to turn, at fixed $j=j_1=j_2$,  the multi-indices quantities $\Delta^j_{mn;kl}$ and $P^j_{mn;kl}$ into two sets (indexed by, say, $m-n=l-k$, see \eqref{consindice1}, \eqref{consindice2}) of matrices with two indices, then by looking for a set of unitary transformations diagonalising the set of matrices stemming from $\Delta^j_{mn;kl}(\alpha,\beta)$. Then, each of these unitary transformations is found to be a solution of a 3-terms recursive equation  defining a particular class of orthogonal polynomials. In the case of the scalar noncommutative field theories considered in \cite{Grosse:2003aj}, the above recursive equations are solved by a specific family of Meixner polynomials \cite{kk}. Having diagonalized the kinetic operator, the expression for the propagator follows.
In the present case, a similar direct calculation can be performed and gives rise after some calculations to a family of 3-term recursive equations, solved by a particular family of  polynomials, the so called dual Hahn polynomials \cite{kk}.
Details of this derivation are given in  Appendix.

However the whole computation of the propagator can be considerably simplified observing that, as already mentioned,  we already know   an alternative orthogonal base for $End({\cal{V}}^j)$ where the  kinetic part of the action can be shown  to be diagonal. These are the so called fuzzy spherical harmonics. We show in the appendix that the two methods are equivalent,  and we exhibit the proportionality relation between dual Hahn polynomials and the fuzzy spherical harmonics.

\subsection{The kinetic action in the fuzzy spherical harmonics base}\label{subsect4.2}

It is well known and largely exploited in the  literature on the fuzzy sphere, that $End({\cal{V}}^j)$
 is spanned by  the so called Fuzzy Spherical Harmonics Operators, or, up to normalization factors,  irreducible tensor operators.
 We shall indicate them as
 \be
 \hat Y^j_{lk}\in End({\cal{V}}^j),\;\; l\in\mathbb{N},\;\; \ 0\le l\le 2j,\;\; -l\le k\le l,
  \ee
  whereas the unhatted objects $Y^j_{lk}$ are their symbols and are sometimes referred to as fuzzy spherical harmonics with no other specification (notice however
that the  functional form  of the symbols does depend on the dequantization map that has been chosen).
Concerning  the definition and normalization of the fuzzy spherical harmonics operators, we use the following conventions \cite{Das}. We set
\be
 J_{\pm}=\frac{{\hat x}_{\pm}}{\lambda}.
 \ee
We have, for $l=m$,
 \begin{equation}
\hat Y^j_{ll}:=(-1)^l  \frac{ \sqrt{2j+1}}{ l !}
\frac {\sqrt{ (2l+1)! (2j-l)! } } { (2j+l+1)!}  (J_+)^l   \label{highestfuzzyharmonics}
\end{equation}
while the others  are defined recursively through the action of $J_-$
\be
\hat Y^j_{lk}:= [ (l+k+1) (l-k)]^{-\frac{1}{2}} [J_-,\hat Y^j_{l,k+1}],\;\;  \label{fuzzyharmonics}
\ee
and satisfy
\be
(\hat Y^j_{lk})^\dag=(-1)^{k-2j} \hat Y^j_{l,-k},\
\langle \hat Y^j_{l_1 k_1},\hat Y^j_{l_2 k_2}\rangle=\Tr((\hat Y^j_{l_1k_1})^\dag{ \hat Y}^j_{l_2k_2})=(2j+1)\delta_{l_1l_2}\delta_{k_1k_2} \label{relatfuzzyharm}.
\end{equation}
The symbols are defined through the dequantization map \eqn{wigner}
\be
Y^j_{lk}:=\langle z|\,\hat Y^j_{lk}\,|z\rangle. \label{fuzzyha}
\ee
From \eqn{fuzzyharmonics}, \eqn{relatfuzzyharm} and the Lie algebra relation $[J_+,J_-]= 2J_3$ it is straightforward to check the usual properties
\beqa
{[} J_{-},{\hat Y}^j_{lk} { ] }&=& \sqrt{ (l+k)(l-k+1) } { \hat Y}^{j}_{l \, k-1}  \\
{[} J_{+},{\hat Y}^j_{lk} { ] }&=& \sqrt{(l-k)(l+k+1)} { \hat Y}^{j}_{l \, k+1} \\
{[}J_3,{\hat Y}^j_{lk} { ] } &=& k\; {\hat Y}^{j}_{l  k}\\
{[} J_{i},{[} J_{i},{\hat Y}^j_{lk} {]} {]}&=& l(l+1) {\hat Y}^{j}_{l  k}
\eeqa
which imply for the symbols
\beqa
{[}x_-,Y^j_{lk}{ ] }_\star=\lambda  <z|{[} J_{-},{\hat Y}^j_{lk} { ] }|z>= \lambda \sqrt{(l+k)(l-k+1)}Y^{j}_{l \, k-1} \nn\\
{[}x_+,Y^j_{lk}{ ] }_\star= \lambda  <z| {[} J_{+},{\hat Y}^j_{lk} { ] } |z>= \lambda \sqrt{(l-k)(l+k+1)}  Y^{j}_{l \, k+1}\nn\\
{[}x_3,Y^j_{lk}{ ] }_\star=\lambda  <z|{[}J_3,{\hat Y}^j_{lk} { ] } |z>=\lambda \; k \; Y^{j}_{l  k}\label{xharm}
\eeqa
and in particular
\be
{[}x_i,{[}x_i, Y^j_{lk}{ ]}_\star {]}_\star=\lambda ^2 <z|{[} J_{i},{[} J_{i},{\hat Y}^j_{lk} {]} {]} |z>=\lambda^2 \; l(l+1) \; Y^{j}_{l  k}. \label{xxharm}
\ee
In order to evaluate the action of the full Laplacian \eqn{lapl} on the fuzzy spherical harmonics we need to compute $x_0\star Y^j_{lk}$. To this we express the fuzzy spherical harmonics  in the canonical base $ v^j_{m\tilde m}$
\begin{equation}
 Y^j_{lk}=\sum_{-j\le m,\tilde m\le j} ( Y^j_{lk})_{m \tilde m} v^j_{m \tilde m},\  \label{matrixelemfuzzydef}
\end{equation}
where the coefficients are given in terms of Clebsch-Gordan coefficients by \cite{Das}
\begin{equation}
(Y^j_{lk})_{m \tilde m}=\langle \hat v^j_{m \tilde m}|\hat Y^j_{lk}\rangle={\sqrt{2j+1}}(-1)^{j-\tilde m}\cg {j}{m}{j}{-\tilde m}{l}{k},\ -j\le m,\tilde m\le j \, ,\label{matrixelemfuzzy}
\end{equation}
\be
({Y^j_{lk}}^\dag)_{m \tilde m}=(-1)^{-2j} (Y^j_{lk})_{\tilde m m}. \label{fuzzydag}
\ee
On using  Eq. \eqn{matrixelemfuzzydef}, and the orthogonality relation of Clebsch-Gordan coefficients
\be
\sum_{m \tilde m} \cg {j}{m}{j}{\tilde m}{l_1}{k_1}
\cg {j}{m}{j}{\tilde m}{l_2}{k_2}=\delta_{l_1 l_2}\delta_{k_1k_2} \label{orthoclebsch}
\ee
together with the star product
\eqn{ortov}  it is straightforward to check that
\beqa
\int {Y^{j_1}}_{l_1 k_1}\star Y^{j_2}_{l_2 k_2}&=&\kappa^3 (-1)^{k_1+2j_1}(2j_1 +1) \delta^{j_1 j_2}\delta_{l_1 l_2} \delta_{-k_1 k_2}\\
\int {Y^{j_1}}_{l_1 k_1}^\dag \star Y^{j_2}_{l_2 k_2}&=&\kappa^3 (2j_1 +1)  \delta^{j_1 j_2}\delta_{l_1 l_2} \delta_{k_1 k_2}
\eeqa
in accordance with the second of relations \eqn{relatfuzzyharm}.
Eq. \eqn{matrixelemfuzzydef},  and last of Eqs. \eqn{x0v} imply
\be
x_0\star Y^j_{lk}= \sum_{-j\le m,\tilde m\le j} ( Y^j_{lk})_{m \tilde m} \;  x_0\star  v^j_{m \tilde m}= \lambda j Y^j_{lk}. \label{xoharm}
\ee
Thus, from the definition of the Laplacian \eqn{lapl}  and Eqs. \eqn{xxharm},  \eqn{xoharm},  we verify that in the fuzzy spherical harmonics base the whole kinetic term is diagonal,
\begin{equation}
\Delta (\alpha,\beta) Y^j_{lk}= \frac{\lambda^2}{\kappa^4}\left(\alpha l(l+1)+\beta j^2 \right)Y^j_{lk}\    \;\;\; j\in{{\mathbb{N} }\over{2}},\ 0\le l\le 2j,\ l\in\mathbb{N},\  -l\le k\le l  \label{finallaplacescale}
\end{equation}
with eigenvalues
\be
\frac{\lambda^2}{\kappa^4}\gamma(j,l;\alpha,\beta):=  \frac{\lambda^2}{\kappa^4}\big( \alpha l (l+1)+\beta j^2  \big) . \label{diagdelta}
\ee
Note that \eqref{finallaplacescale}, \eqref{diagdelta}, with our choice for the dimensionality of the parameters $\lambda$, $\kappa$,  single out a natural choice for the UV and IR regimes, which correspond respectively to large or small  values of $\gamma(j,l;\alpha,\beta)$.
We can expand the fields  $\phi \in \R^3_\lambda$ in the fuzzy harmonics base
\begin{equation}
\phi=\sum_{j\in { {{ {\mathbb{N}} }\over{2} }}}\sum_{l=0}^{2j}\sum_{k=-l}^l\varphi^j_{lk} Y^j_{lk},\label{expansharmonic}
\end{equation}
and comparing with their expression in the canonical base,  \eqn{phican},
we readily obtain
\begin{equation}
\phi^j_{m \tilde m}=\sum_{l=0}^{2j}\sum_{k=-l}^l ( Y^j_{l\kappa})_{m\tilde m}\varphi^j_{lk}=\sum_{l=0}^{2j}\sum_{k=-l}^l {\sqrt{2j+1}}(-1)^{j-m_2}\cg {j}{m}{j}{-\tilde m}{l}{k}\varphi^j_{lk}\label{coordinatchange2},
\end{equation}
which relates the propagating degree of freedom $\varphi^j_{lk}$, for which the kinetic term of the action is diagonal, to the interacting degree of freedom $\phi^j_{m \tilde m}$ for which the interaction term is diagonal.

From \eqn{expansharmonic} and the properties of the fuzzy harmonics we derive, for a Hermitian field $\phi$,
\begin{equation}
\phi^\dag=\phi\ \Rightarrow\ \varphi^j_{lk}=(-1)^{-k-2j}\varphi_{l,-k}^{j*},\;\;\; \;\; j\in{{\mathbb{N} }\over{2}},\ l\in\mathbb{N},\ 0\le l\le 2j,\ -l\le k\le l\label{hermiteanharmonic}.
\end{equation}
Therefore, upon restoring the mass term,  we can compute the kinetic action in the fuzzy harmonics base
\beqa
\int \phi\star (\Delta+\mu^2) \phi&= & \frac{\lambda^2} {\kappa^4}\sum \varphi^j_{l_1 k_1} \varphi^j_{l_2 k_2}\left( \gamma(j,l_2;\alpha,\beta)+\frac{\kappa^4}{\lambda^2} \mu^2\right) \int Y^j_{l_1 k_1}\star Y^j_{l_2 k_2}
\label{diagkin}\\
&=& \frac{\lambda^2} {\kappa^4}\sum \varphi^{j*}_{l_1 k_1} \varphi^{j} _{l_2 k_2}\left (\gamma(j,l_2;\alpha,\beta)+\frac{\kappa^4}{\lambda^2}\mu^2\right) \int{ Y^j}^\dag_{l_1 k_1}\star Y^j_{l_2 k_2}\nn\\
&=&
\frac{\lambda^2}{\kappa} \sum | \varphi^{j} _{l k}|^2 (2j +1)( \gamma(j,l;\alpha,\beta)+\frac{\kappa^4}{\lambda^2}\mu^2)
\eeqa
which is positive for $\alpha,\beta\ge0$.
We define for further convenience
\be
(\Delta_{diag})^{j_1 j_2}_{l_1 k_1 l_2 k_2}= \frac{1}{\lambda^3}\int Y^{j_1}_{l_1 k_1} \star \Delta(\alpha,\beta)  Y^{j_2}_{l_2 k_2} = \frac{1}{\lambda^2}   (-1)^{k_1+2j_1} (2j_1+1)  \gamma(j_1,l_1;\alpha,\beta) \delta^{j_1 j_2}\delta_{l_1 l_2} \delta_{-k_1 k_2}.  \label{deltadiag}
\ee

\subsection{The propagator}\label{subsect4.3}
We can now state the following Lemma
\begin{lem}
Let $\R^3_\lambda$ be the noncommutative algebra  defined in \eqref{sumfuzzy}, with canonical base $\{ v^j_{m\tilde m}\}$, $j\in { {{ {\mathbb{N}} }\over{2} }},\;\ -j\le m,\tilde m\le j$,    together with the fuzzy spherical harmonics base $\{ Y^j_{lk}\}$, $j\in { {{ {\mathbb{N}} }\over{2} }},\;\ 0\le l\le 2j,\ -j\le m\le j$. The inverse of the kinetic operator, in the canonical base,
\be
\bigl(\Delta(\alpha,\beta)+\mu^2 \mathbf{1} \bigr)^{j_1 j_2}_{m_1\tilde m_1;m_2\tilde m_2}=\frac{1}{\pi^2\theta^2} \int v^{j_1}_{m_1 \tilde m_1}\star \bigl(\Delta (\alpha,\beta)+\mu^2 \mathbf{1}\bigr) v^{j_2}_{m_2\tilde m_2}
\ee
is given by
\begin{equation}
(P(\alpha,\beta))^{j_1 j_2}_{p_1,\tilde p_1;p_2 \tilde p_2}=\delta^{j_1 j_2}\sum_{l=0}^{2j_1}\sum_{k=-l}^l\int_0^\infty dt\  e^{-t(2j_1+1)(\frac{\lambda^2}{\kappa^4}\gamma(j,_1 l;\alpha,\beta)+\mu^2)}({Y^{j_1}_{l k}}^\dag)_{p_1  \tilde p_1}(Y^{j_2}_{lk})_{p_2\tilde p_2}\label{propagator1},
\end{equation}
where $\gamma(j,l;\alpha,\beta)$, the eigenvalues of the Laplacian operator, have been given in \eqn{diagdelta}.
\end{lem}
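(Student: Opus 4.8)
The plan is to reduce everything to a single fuzzy sphere, use the diagonalization of the kinetic operator in the fuzzy spherical harmonics base carried out in subsection~\ref{subsect4.2}, invert the resulting diagonal form trivially, transport it back to the canonical base with the Clebsch--Gordan change of base \eqn{coordinatchange2}, and finally rewrite $A^{-1}=\int_0^\infty dt\, e^{-tA}$ to obtain the Schwinger--parametrized expression \eqn{propagator1}. The structural facts I would invoke are these. By \eqn{consindice1} the operator $\Delta(\alpha,\beta)+\mu^2\mathbf{1}$ maps each block $\mathbb{S}^j\simeq End(\mathcal{V}^j)$ to itself, so one may work at fixed $j=j_1=j_2$. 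Expanding $\bigl(\Delta(\alpha,\beta)+\mu^2\mathbf{1}\bigr)v^{j_2}_{m_2\tilde m_2}$ back in the canonical base and using the orthogonality relation \eqn{ortov} shows that the kinetic matrix \eqn{kineticterm} is, up to a transposition on the first index pair, the matrix representing the operator in the orthonormal base $\{\hat v^j_{m\tilde m}\}$ of $End(\mathcal{V}^j)$; this transposition is exactly what makes the right-hand side of \eqn{definvers} carry the swapped deltas $\delta_{ms}\delta_{nr}$, as already encoded in $\mathbf{1}^{j_1j_2}_{m_1\tilde m_1 m_2\tilde m_2}=\delta^{j_1j_2}\delta_{\tilde m_1 m_2}\delta_{m_1\tilde m_2}$. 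On the block, \eqn{finallaplacescale}--\eqn{diagdelta} give $\bigl(\Delta(\alpha,\beta)+\mu^2\mathbf{1}\bigr)\hat Y^j_{lk}=\bigl(\tfrac{\lambda^2}{\kappa^4}\gamma(j,l;\alpha,\beta)+\mu^2\bigr)\hat Y^j_{lk}$, while \eqn{relatfuzzyharm} says the $(2j+1)^2$ operators $\hat Y^j_{lk}$ form an orthogonal basis of $End(\mathcal{V}^j)$ of squared norm $2j+1$, so that $\mathrm{id}_{End(\mathcal{V}^j)}=\tfrac{1}{2j+1}\sum_{l,k}|\hat Y^j_{lk}\rangle\langle\hat Y^j_{lk}|$, the rank-one maps being taken with respect to the Hilbert--Schmidt product.

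The computation is then short. The inverse of the (positive, for $\alpha,\beta>0$ and $\mu^2>0$) kinetic operator on the block $j$ is $\tfrac{1}{2j+1}\sum_{l,k}\bigl(\tfrac{\lambda^2}{\kappa^4}\gamma(j,l;\alpha,\beta)+\mu^2\bigr)^{-1}|\hat Y^j_{lk}\rangle\langle\hat Y^j_{lk}|$; writing it out in matrix elements via $(Y^j_{lk})_{ab}=\langle\hat v^j_{ab},\hat Y^j_{lk}\rangle$ and \eqn{fuzzydag}, and substituting $\bigl((2j+1)(\tfrac{\lambda^2}{\kappa^4}\gamma+\mu^2)\bigr)^{-1}=\int_0^\infty dt\, e^{-t(2j+1)(\frac{\lambda^2}{\kappa^4}\gamma+\mu^2)}$, reproduces \eqn{propagator1}, the factor $2j+1$ in the exponent being precisely the norm $\langle\hat Y^j_{lk},\hat Y^j_{lk}\rangle$ absorbed from the completeness relation. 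To conclude one verifies the two identities in \eqn{definvers}: contracting $\sum_{k,l}\Delta\cdot P$ and using the eigenvalue equation makes the scalar factors cancel pairwise and leaves $\mathrm{id}_{End(\mathcal{V}^j)}$, whose matrix elements are the swapped deltas by the remark above; the second identity follows symmetrically, using $(\hat Y^j_{lk})^\dag=(-1)^{k-2j}\hat Y^j_{l,-k}$ and the Clebsch--Gordan orthogonality \eqn{orthoclebsch} to account for the $\dag$. Equivalently, one may present the whole argument as: \eqn{coordinatchange2} is the linear change of field variables that trivializes the quadratic kinetic form into the diagonal one of \eqn{diagkin}, and \eqn{propagator1} is simply its inverse transported back by the same Clebsch--Gordan matrix.

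I expect the only genuine difficulty to be bookkeeping rather than conceptual content: keeping track of the index transpositions dictated by the matrix-multiplication form of the $\star$-product (so that ``identity'' means the swap appearing in $\mathbf{1}^{j_1j_2}_{m_1\tilde m_1 m_2\tilde m_2}$), of the $2j+1$ normalization of the fuzzy harmonics that is absorbed into the Schwinger exponent, and of the phases $(-1)^{k\pm2j}$ coming from \eqn{fuzzydag} and \eqn{relatfuzzyharm}; block-diagonality in $j$, guaranteed by \eqn{consindice1}--\eqn{consindice2}, ensures that nothing mixes across distinct fuzzy spheres. I would carry out the fixed-$j$ calculation explicitly and relegate the sign and normalization tracking to a short remark.
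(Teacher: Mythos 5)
Your proposal is correct and takes essentially the same route as the paper: diagonalize the kinetic operator block by block in the fuzzy spherical harmonics base, invert the resulting diagonal form, transport back to the canonical base via the Clebsch--Gordan coefficients, and apply the Schwinger parametrization $A^{-1}=\int_0^\infty dt\,e^{-tA}$. Your completeness-relation phrasing (with the $2j+1$ norm of $\hat Y^j_{lk}$ absorbed into the exponent, the index transposition coming from $\int v\star v$, and the phase from $(\hat Y^j_{lk})^\dag=(-1)^{k-2j}\hat Y^j_{l,-k}$) is just a slightly more abstract packaging of the paper's explicit change of variables between the propagation and interaction bases.
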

\begin{proof}
It is based on the so called Schwinger parametrization. For each positive operator $A$
we can write
\be
\frac{1}{A}= \int_0^\infty  dt\, e^{-t A}
\ee
This applies to the matrix elements of the kinetic operator in the diagonal (propagation) base
\beqa
&&\Bigr[(\Delta_{diag}+\mu^2\mathbf{1} )^{-1} \Bigr]^{j_1 j_2}_{l_1k_1l_2 k_2} =\int dt \,e^{-t (\Delta_{diag}+\mu^2 \mathbf{1})^{j_1 j_2}_{l_1 k_1 l_2 k_2}}   \nn\\
&&\;\; ={(-1)^{k_1+2j_1} } \int dt e^{-t(2j_1+1) (\frac{\lambda^2}{\kappa^4}\gamma(j_1,l_2;\alpha\beta)+\mu^2) } \,\delta^{j_1 j_2}\,\delta_{l_1 l_2} \,\delta_{-k_1 k_2} \label{schw}
\eeqa
Let us perform a change of base from the diagonal to the interaction (canonical) base.
We have
\beqa
\int \phi \star \Delta \phi&=&\kappa^3 \sum \phi^{j_1}_{m_1 \tilde m_1} \Delta^{j_1 j_2}_{m_1 \tilde m_1 m_2 \tilde m_2} \phi^{j_2}_{m_2 \tilde m_2}\nn\\
&=&\kappa^3
\sum \varphi^{j_1}_{l_1 k_1}(Y^{j_1}_{l_1 k_1})_{m_1 \tilde m_1} \Delta^{j_1 j_2}_{m_1 \tilde m_1 m_2 \tilde m_2}\varphi^{j_2}_{l_2 k_2} (Y^{j_2}_{l_2 k_2})_{m_2 \tilde m_2}
\eeqa
By comparing with the expression in the diagonal base \eqn{diagkin}
we obtain
\be
(\Delta^{j_1 j_2}_{diag})_{l_1 k_1 l_2 k_2}= (Y^{j_1}_{l_1 k_1})_{m_1\tilde m_1} \Delta^{j_1 j_2}_{m_1\tilde m_1 m_2\tilde m_2}( Y^{j_2}_{l_2 k_2})_{m_2\tilde m_2} \label{proptoint}
\ee
with inverse transformation
\be
 \Delta^{j_1 j_2}_{m_1\tilde m_1 m_2\tilde m_2} =\frac{1}{(2 j_1+1)^2}(Y^{j_1}_{l_1 k_1})_{m_1\tilde m_1} (\Delta^{j_1 j_2}_{diag})_{l_1 k_1 l_2 k_2}( Y^{j_2}_{l_2 k_2})_{m_2\tilde
m_2}\label{deltadeltadiag}
 \ee
The (massless) propagator is then
\be
[\Delta^{j_1 j_2})^{-1}]_{m_1\tilde m_1 m_2\tilde m_2}=  (Y^{j_1}_{l_1 k_1})_{m_1\tilde m_1}[(\Delta_{diag}^{j_1 j_2})^{-1}  ]_{l_1 k_1 l_2 k_2}  (Y^{j_2}_{l_2 k_2})_{m_2\tilde m_2}
\ee
On replacing the expression for the diagonal inverse \eqn{schw} and on using the first of Eqs. \eqn{relatfuzzyharm}  we arrive at
\beqa
[\Delta+\mu^2 \mathbf{1})^{-1}]^{j_1 j_2}_{m_1\tilde m_1 m_2\tilde m_2} &=& {(-1)^{-k+2j_1}} \delta^{j_1 j_2} \sum_{l=0}^{2j_1} \sum_{k=-l}^l \int dt e^{-t (2j_1+1) \bigl(\frac{\lambda^2}{\kappa^4}\gamma(j_1,l; \alpha,\beta)+\mu^2\bigr)} (Y^{j_1}_{l -k})_{m_1\tilde m_1} (Y^{j_2}_{l k})_{m_2\tilde m_2} \nn\\
&=& {\delta}^{j_1 j_2}  \sum_{l=0}^{2 j_1} \sum_{k=-l}^l\int  dt e^{-t (2j_1+1) \bigl(\frac{\lambda^2}{\kappa^4}\gamma(j_1,l; \alpha,\beta)+\mu^2\bigr)} (Y^{j_1\dag}_{l k})_{m_1\tilde m_1} (Y^{j_2}_{l k})_{m_2\tilde m_2}
\eeqa
which completes the proof. The result can be verified directly, by using the orthogonality properties of the fuzzy harmonics.
\end{proof}

\section{One-loop calculations}\label{sect5}
Once we have established   the form of the propagator in the matrix base,
\be
P^{j_1 j_2}_{m_1\tilde m_1 m_2\tilde m_2}= \delta^{j_1 j_2}\sum_{l=0}^{2j} \sum_{k=-l}^l\int  dt e^{-t (2j_1+1) \bigl(\frac{\lambda^2}{\kappa^4}\gamma(j_1,l; \alpha,\beta)+\mu^2\bigr)} (Y^{j_1\dag}  _{l k})_{m_1\tilde m_1} (Y^{j_2}_{l k})_{m_2\tilde m_2}
\ee
and of the vertex
\be
V^{j_1 j_2 j_3 j_4}_{p_1 \tilde p_1; p_2\tilde p_2; p_3 \tilde p_3; p_4\tilde p_4 }= \frac{g}{4!} \delta^{j_1 j_2}\delta^{j_2 j_3}\delta^{ j_3 j_4}\delta_{\tilde p_1 p_2}  \delta_{\tilde p_2 p_3}
 \delta_{\tilde p_3 p_4}  \delta_{\tilde p_4 p_1} \label{vert}
\ee
the computation of Feynman graphs (ribbon graphs) of every order is  fairly easy: it is just a matter of gluing together the appropriate number of propagators,  which are represented by  a double-line, while contracting them with the diagonal vertex  (see Fig. \ref{propfig}).
\begin{figure}[htb]
\epsfxsize=5.0 in
\begin{align}
\parbox{20\unitlength}{\begin{picture}(22,22)
\put(-70,6){\epsfig{file=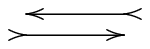,scale=0.9}}
      \put(-87,7){\mbox{\scriptsize$ j_1$}}
       \put(-79,-2){\mbox{\scriptsize$\tilde m_1$}}
       \put(-35,7){\mbox{\scriptsize$j_2$}}
       \put(-48,-2){\mbox{\scriptsize$m_2$}}
       \put(-79,17){\mbox{\scriptsize$m_1$}}
       \put(-48,17){\mbox{\scriptsize$\tilde m_2$}}
      \put(30,-24){\epsfig{scale=.9,file=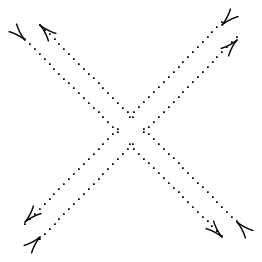}}
     \put(15,-24){\mbox{\scriptsize$j_1$}}
      \put(25,-13){\mbox{\scriptsize$p_1$}}
      \put(25,-34){\mbox{\scriptsize$\tilde p_1$}}
     \put(92,-24){\mbox{\scriptsize$j_2$}}
      \put(80,-34){\mbox{\scriptsize$p_2$}}
      \put(80,-13){\mbox{\scriptsize$\tilde p_2$}}
      \put(92,28){\mbox{\scriptsize$j_3$}}
      \put(80,39){\mbox{\scriptsize$\tilde p_3$}}
      \put(80,18){\mbox{\scriptsize$p_3$}}
     \put(15,28){\mbox{\scriptsize$j_4$}}
      \put(25,39){\mbox{\scriptsize$p_4$}}
      \put(25,18){\mbox{\scriptsize$\tilde p _4$}}
   \end{picture}}
\end{align}
\\
\caption{The propagator and the vertex.    }
\label{propfig}
\end{figure}
 Being independent on the details of the propagator, they can be obtained within a path-integral approach  as for example in \cite{beta},
 where the generating functional for connected correlation functions is explicitly computed  for  non-local matrix models (i.e. with non-diagonal propagator) with quartic interaction, up to second order in the coupling constant.
 In the following we explicitly compute typical  one-loop planar and non planar contributions  to the connected two and four point correlation functions.

\subsection{Planar two-point Green function}\label{planarsection}
A typical diagram contributing to the 2-point connected correlation function is depicted on Fig. \ref{planar}. Its amplitude is given by
\beqa
\mathcal{A}^{{j_1 j_2}^P}_{p_1\tilde p_1;p_2\tilde p_2}&=&\sum_{j_3 j_4=0}^{\infty} \sum_{p_3\tilde p_3 p_4\tilde p_4} V^{j_1 j_2 j_3 j_4}_{p_1 \tilde p_1; p_2\tilde p_2; p_3 \tilde p_3; p_4\tilde p_4 } P^{j_3 j_4}_{ p_4 \tilde p_4  p_3 \tilde p_3}=\delta^{j_1 j_2}\delta_{\tilde p_1 p_2} \sum_{ p_3}  P^{j_1 j_2}_{ p_1p_3 p_3 \tilde p_2}\nn\\
&=&\frac{\kappa^4}{\lambda^2} \delta^{j_1 j_2}\sum_{l=0}^{2j_1} \frac{1}{(2j_1+1) \bigl(\gamma(j_1,l,\alpha,\beta)+\frac{\kappa^4}{\lambda^2}\mu^2\bigr)}\sum_{ p_3}\sum_k (Y^{j_1\dag}_{lk})_{ p_1 p_3}
(Y^{j_2}_{lk})_{ p_3 \tilde p_2} \label{amplanarbase}
\eeqa
By using the expression of the fuzzy harmonics in the canonical matrix base, \eqn{matrixelemfuzzy}, \eqn{fuzzydag}, together with the relation
\be
\cg {j_1}{m_1}{j_2}{ m_2}{l}{k}=(-1)^{j_1-m_1}\sqrt{\frac{2l+1}{2j_2+1}}\cg {j_1}{m_1}{l}{-k}{j_2}{-m_2}
\ee
the sums over $ p_3$ and $k$ can be entirely performed. From the relation
$$
\frac{1}{2j+1}(Y^{j\dag}_{lk})_{ p_1 p_3}
(Y^{j}_{lk})_{ p_3 \tilde p_2}= (-1)^{-p_1-\tilde p_2} \cg {j}{ p_3}{j}{ -p_1}{l}{k} \cg {j}{ p_3}{j}{ -\tilde p_2}{l}{k}
$$
we obtain
\be
\mathcal{A}^{{j_1 j_2}^P}_{p_1\tilde p_1;p_2\tilde p_2}=\frac{\kappa^4}{\lambda^2}\delta^{j_1 j_2}\delta_{\tilde p_1 p_2}\delta_{p_1 \tilde p_2}\sum_{l=0}^{2j_1} (-1)^{2j_1}\frac{2l+1}{(2j_1+1)(\gamma(j_1,l;\alpha\beta)+\frac{\kappa^4}{\lambda^2}\mu^2)}\label{planar2pt}
\ee
It can be verified that ${\mathcal{A}^{j_1 j_2}}^P_{p_1\tilde p_1;p_2\tilde p_2}$ \eqref{planar2pt} is finite, including the case  $j\to\infty$. Indeed, let us pose $j_1=j_2=j$ and let us assume first that $\beta=0$. By a standard result of analysis, one can write
\begin{eqnarray}
\lim_{j\to\infty}|{\mathcal{A}^j}^{P(\beta=0)}_{p_1\tilde p_1;p_2\tilde p_2}|&=&\frac{\kappa^4}{\lambda^2}\delta_{\tilde p_1 p_2}\delta_{p_1 \tilde p_2}\lim_{j\to\infty}\frac{1}{2j+1}\sum_{l=0}^{2j}\frac{2l+1}{(\alpha l (l+1)+\frac{\kappa^4}{\lambda^2}\mu^2)}\nonumber
\\
&=&\frac{\kappa^4}{\lambda^2}\delta_{\tilde p_1 p_2}\delta_{p_1 \tilde p_2}\lim_{j\to\infty}\frac{1}{2j+1}
\int^{2j}_{0} dx\ {{2x+1}\over{(\alpha x(x+1)+\frac{\kappa^4}{\lambda^2}\mu^2) }}=0\label{converplanar},
\end{eqnarray}
showing finitude of ${\mathcal{A}^j}^{P(\beta=0)}_{p_1\tilde p_1;p_2\tilde p_2}$. This extends to ${\mathcal{A}^j}^P_{p_1\tilde p_1;p_2\tilde p_2}$ since $|{\mathcal{A}^j}^P_{p_1\tilde p_1;p_2\tilde p_2}|\le|{\mathcal{A}^j}^{P(\beta=0)}_{p_1\tilde p_1;p_2\tilde p_2}|$ holds true.\par

\begin{figure}[htb]
\epsfxsize=6.0 in
\begin{align}
\parbox{30mm}{\begin{picture}(22,22)
      \put(0,-30){\epsfig{scale=.9,file=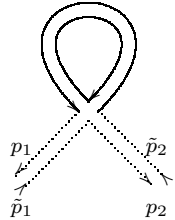}}
      \put(-1,-17){\mbox{\scriptsize$p_1$}}
      \put(-1,-40){\mbox{\scriptsize$\tilde p_1$}}
      \put(50,-40){\mbox{\scriptsize$p_2$}}
      \put(50,-17){\mbox{\scriptsize$\tilde p _2$}}
\end{picture}}\nn
\end{align}
\\
\caption{Planar diagram contributing to the 2-point correlation function}
\label{planar}
\end{figure}
In order to compare with known results on the fuzzy sphere we look for an expression of the planar amplitude in the fuzzy harmonics base. To this, we write the
 contribution of the planar two-point amplitude to the quadratic part of the effective action. At   one loop we have, up to multiplicative combinatorial factors (see \cite{beta})
\begin{eqnarray}
\Gamma^{(2) P}&=&\sum_{j_i, p_i,\tilde p_i}\phi^{j_1}_{p_1\tilde p_1}{\mathcal{A}^{j_1 j_2}}^P_{p_1\tilde p_1;p_2\tilde p_2}\phi^{j_2}_{p_2\tilde p_2}=\sum_{j_i, l_i, k_i}\varphi^{j_1}_{l_1k_1}\big(\sum_{p_i,\tilde p_i,}(Y^{j_1}_{l_1k_1})_{p_1\tilde p_1}{\mathcal{A}^{j_1 j_2}}^P_{p_1\tilde p_1;p_2\tilde p_2} (Y^{j_2}_{l_2k_2})_{p_2\tilde p_2}\big)\varphi^{j_2}_{l_2k_2}\nonumber
\\
&=&\sum_{j_i;l_i,k_i}\varphi^{j_1}_{l_1k_1}{\tilde{\mathcal{A}}}^{j_1 j_2\,P}_{l_1k_1;l_2k_2} \varphi^{j_2}_{l_2k_2}     \;\;\;\;\;\;\;\; i=1,2\label{changebasis}
\end{eqnarray}
where  we used $\phi^j_{mn}=\sum_{l,k}\varphi^j_{lk}(Y^j_{lk})_{mn}$ to obtain the  middle equality in \eqref{changebasis}, while the rightmost equality defines ${\tilde{\mathcal{A}}}^{{j_1 j_2}\,P}_{l_1k_1;l_2k_2}$, the amplitude in fuzzy harmonics  base. We have
\be
{\tilde{\mathcal{A}}}^{{j_1 j_2}\,P}_{l_1k_1;l_2k_2}= \frac{\kappa^4}{\lambda^2}\delta^{j_1 j_2}\sum_{p_i \tilde p_i=-j_i }^{j_i}\delta_{\tilde p_1 p_2}\delta_{p_1 \tilde p_2}\frac{ (-1)^{2j_1}}{2j_1+1}\sum_{l=0}^{2j_1}\frac{2l+1}{\alpha l(l+1)+\beta j_1^2+ \frac{\kappa^4}{\lambda^2}\mu^2} (Y^{j_1}_{l_1 k_1})_{ p_1 \tilde p_1} (Y^{j_2}_{l_2 k_2})_{p_2\tilde p_2}.
\ee
 The sum over $p_i, \tilde p_i$ can be performed thanks to the Kronecker delta symbols, giving rise to
\be
{\tilde{\mathcal{A}}}^{{j_1 j_2}\,P\ }_{l_1k_1;l_2k_2}= \frac{\kappa^4}{\lambda^2}\delta^{j_1 j_2}\sum_{l=0}^{2j_1} \frac{2l+1}{\alpha l(l+1)+ \beta j_1^2 + \frac{\kappa^4}{\lambda^2}\mu^2} (-1)^{k_2} \delta_{-k_1 k_2}\delta_{l_1l_2}.  \label{amplitplanarres}
\ee
 In order to establish a connection with the results which have been obtained in the literature on the fuzzy sphere \cite{vaidya01, chu01}, we fix $j_1=j_2=j$  and    $\beta=0$  so that our kinetic operator reproduces the Laplacian mostly considered within fuzzy sphere studies. We obtain
 \be
{\tilde{\mathcal{A}}}^{{j}\,P\ (\beta=0)}_{l_1k_1;l_2k_2}= \frac{\kappa^4}{\lambda^2}\sum_{l=0}^{2j} \frac{2l+1}{\alpha l(l+1)+\frac{\kappa^4}{\lambda^2}\mu^2} (-1)^{k_2} \delta_{-k_1 k_2}\delta_{l_1l_2}\label{amplitplanarres2}
\ee
which coincides with the result found in \cite{vaidya01,chu01}. \par
Let us notice that, unlike the result in the canonical base, Eq. \eqn{converplanar},  this amplitude  is logarithmically divergent  with $j\rightarrow \infty$.
Indeed, its behaviour is ruled by
the behaviour of the sum
\begin{equation}
{\cal{P}}=\sum_{l=0}^{n}{{2l+1}\over{l(l+1)+\nu^2 }},\;\;\;\; \nu^2:=\frac{\kappa^4}{\lambda^2}{{ \mu^2}\over{\alpha }}\;\;\;\;\; n=2j \label{serieplan2pt}
\end{equation}
We assume $\nu>0$. It can be readily observed that \eqref{serieplan2pt} is divergent for $n\rightarrow \infty$. Indeed,  let us introduce the following positive function on $[0,+\infty[$, \; $f(x)={{2x+1}\over{x(x+1)+\nu^2 }}$. One can check that it is monotonically decreasing on $[0,+\infty[$ provided $\nu^2\le{{1}\over{2}}$. Then, from an elementary result of analysis, $\lim_{n\to\infty}\sum_{l=0}^n{{2l+1}\over{l(l+1)+\nu^2 }}$ behaves as $\int_0^\infty dx\ f(x)$. But $\int_0^\infty dx\ f(x)=[\log(x(x+1)+\nu^2) ]_0^\infty$ which diverges and  so does \eqref{serieplan2pt}. When $\nu^2\ge{{1}\over{2}}$, the above analysis holds true provided one replaces  the domain of $f(x)$ by the domain on which $f(x)$ is decreasing and modifies accordingly the lowest value in the summation of the series.

The divergence developed in the propagating base is however only an apparent one, as can be seen inverting the relation \eqn{amplitplanarres} for the amplitude in the canonical matrix base. Let us consider this in detail. Eq. \eqn{matrixelemfuzzydef} and the orthogonality of Clebsch-Gordan  imply
\be
v^j_{m\tilde m}= \frac{(-1)^{2j}}{2j+1} \sum_{lk} (Y^j_{lk})_{m\tilde m} Y^j_{lk}
\ee
so that,
inverting Eq.  \eqn{changeofbase} we obtain
\be
\varphi_{lk}^j= \frac{(-1)^{2j}}{2j+1} \sum_{m \tilde m} \phi_{m\tilde m}^j (Y^j_{lk})_{m\tilde m} \label{changeofbaseinv}
\ee
Thus, from \eqn{changebasis}  we obtain the two-point planar amplitude in the canonical matrix base, in terms of the one in the fuzzy harmonics base
\be
\mathcal{A}^{j_1 j_2 \,P}_{m_1\tilde m_1;m_2\tilde m_2}=\frac{(-1)^{2(j_1+j_2)}}{(2j_1+1)(2j_2+1)} \sum_{l_i k_i} (Y^{j_1}_{l_1 k_1})_{m_1 \tilde m_1} \tilde{\mathcal{A}}^{j_1 j_2 \,P}_{l_1k_1;l_2k_2} (Y^{j_2}_{l_2 k_2})_{m_2 \tilde m_2}.
\ee
The latter being proportional to $(-1)^{k_2}$, the former becomes an alternating sum, which explains our results.

\subsection{Non-planar two-point graph}\label{nonplanarsection}
A typical non-planar contribution to the connected  2-point correlation function at one loop  is represented in Fig. \ref{nonplanar}.
\begin{figure}[htb]
\epsfxsize=5.0 in
\begin{align}
\parbox{40mm}{\begin{picture}(22,22)
    \put(0,-20){\epsfig{scale=.9,file=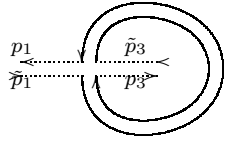}}
    \put(-3,0){\mbox{\scriptsize$\tilde p_1$}}
    \put(-3,12.5){\mbox{\scriptsize$p_1$}}
    \put(40,0){\mbox{\scriptsize$p_3$}}
    \put(40,12.5){\mbox{\scriptsize$\tilde p_3$}}
\end{picture}}\nn
\end{align}
\\
\caption{Nonplanar diagram contributing to the two-point function}
\label{nonplanar}
\end{figure}
The amplitude is given by
\beqa
{\mathcal{A}^{j_1 j_3}}^{NP}_{p_1\tilde p_1;p_3\tilde p_3}&=&\sum_{j_2 j_4\in\frac{\mathbb{N}}{2}} \sum_{p_2\tilde p_2 p_4\tilde p_4} V^{j_1 j_2 j_3 j_4}_{p_1 \tilde p_1; p_2\tilde p_2; p_3 \tilde p_3; p_4\tilde p_4 } P^j_{\tilde p_4 p_4 \tilde p_2 p_2}=\delta^{j_1 j_3}  P^{j_1 j_3}_{ p_1\tilde p_3  p_3 \tilde p_1} \nn\\
&=&\frac{\kappa^4}{\lambda^2}\delta^{j_1 j_3}\sum_{l=0}^{2j_1} \frac{1}{(2j_1+1) \bigl(\gamma(j_1,l,\alpha,\beta)+\frac{\kappa^4}{\lambda^2}\mu^2\bigr)}\sum_k (Y^{j_1\dag}_{lk})_{ p_1\tilde p_3}
(Y^{j_3}_{lk})_{ p_3 \tilde p_1}  \nn\\
&=&\frac{\kappa^4}{\lambda^2}\delta^{j_1 j_3}\sum_{l=0}^{2j_1}  \frac{1}{ \bigl(\gamma(j_1,l,\alpha,\beta)+\frac{\kappa^4}{\lambda^2}\mu^2\bigr)}\times \nn\\
&&\;\;\sum_k (-1)^{p_1+\tilde p_1}\cg{j_1}{\tilde p_3}{j_1}{-p_1}{l}{k}\cg{j_1}{p_3}{j_1}{-\tilde p_1}{l}{k}\  \label{Nonplanfin}
\eeqa
We first consider the simpler case with $p_1=\tilde p_1$, $p_3=\tilde p_3$ and assume for a while that the dimensionless parameter $m^2:=\frac{\kappa^4}{\lambda^2}\mu^2$ satisfies $m^2\ge1$. Then, the following estimate holds true
\beqa
&&\sum_{l=0}^{2j_1}\sum_{k=-l}^l\frac{1}{ \bigl(\alpha l(l+1)+\beta j_1^2+\frac{\kappa^4}{\lambda^2}\mu^2\bigr)}{\cg{j_1}{ p_3}{j_1}{-p_1}{l}{k}}    \cg{j_1}{p_3}{j_1}{- p_1}{l}{k} \nn\\
&& \;\; \; \le\sum_{l=0}^{2j_1}\sum_{k=-l}^l{\cg{j_1}{ p_3}{j_1}{-p_1}{l}{k}}\cg{j_1}{p_3}{j_1}{- p_1}{l}{k} =1 \label{nplanestimate1}
\eeqa
so that  $|{\cal{A}}^{j_1 j_3}_{p_1,p_1;p_3 p_3}|$ satisfies
\begin{equation}
|{\cal{A}}^{j_1 j_3}_{p_1, p_1; p_3  p_3}|\le\frac{\kappa^4}{\lambda^2}\delta^{j_1 j_3}\,\ \forall j_i\in\frac{\mathbb{N}}{2}, -j_i\le p_i \le j_i \label{geneboundnp}.
\end{equation}
From \eqref{Nonplanfin}, \eqref{nplanestimate1} and \eqref{geneboundnp}, one concludes that ${{\mathcal{A}}^j}^{NP}_{p_1p_1;p_3 p_3}$ is always finite for any value of the external indices. Relaxing now the above assumption of equality among the external indices, one notices  that the eigenvalues in the spectrum of the operator describing the propagator vanish at large $j$ and are of finite degeneracy, signaling a compact operator, hence bounded. From the very definition of the operator norm, the 2nd equality in \eqref{Nonplanfin} implies that the nonplanar amplitude is finite.\par

The relevant expression for the amplitude in the propagation basis can be computed in a way similar to the one of subsection \ref{planarsection}. We obtain
\beqa
&&{\tilde{\mathcal{A}}}^{j_1 j_2\,{NP}}_{l_1k_1;l_2 k_2}=\sum_{p_i,\tilde p_i}\mathcal{A}^{j_1 j_2\,{NP}}_{p_1\tilde p_1 p_2\tilde p_2}(Y^{j_1}_{l_1k_1})_{p_1\tilde p_1}  (Y^{j_2}_{l_2k_2})_{p_2\tilde p_2}=\frac{\kappa^4}{\lambda^2}\delta^{j_1 j_2}\sum_{l=0}^{2j_1}  \frac{2j_1+1}{ \bigl(\alpha l(l+1)+\beta j_1^2+\frac{\kappa^4}{\lambda^2}\mu^2\bigr)}\times\nn\\
&& \;\; \sum_{k=-l}^l\sum_{p_i,\tilde p_i=-j_i}^{j_i} (-1)^k\cg{j_1}{\tilde p_2}{j_2}{-p_1}{l}{k}\cg{j}{p_2}{j}{-\tilde p_1}{l}{k} \cg{j_1}{ p_1}{j_1}{-\tilde p_1}{l_1}{k_1}\cg{j_1}{p_2}{j_1}{-\tilde p_2}{l_2}{k_2}\nn\\
&&\;\;
=\frac{\kappa^4}{\lambda^2}\delta^{j_1 j_2}\sum_{l=0}^{2j_1}\frac{(2j_1+1)(2l+1)}{ \bigl(\alpha l(l+1)+\beta j_1^2+\frac{\kappa^4}{\lambda^2}\mu^2\bigr)} (-1)^{l_1+l+2j_1-k_1}\delta_{l_1 l_2} \delta_{k_1,-k_2}\left\{\begin{array}{ccc} j_1&j_1&l_1\\ j_1&j_1&l \end{array} \right\} \label{npres}
\eeqa
where the last term is a Wigner 6j-symbol. To obtain the rightmost equation, we have used the relation between Clebsch-Gordan coefficients and Wigner 3j-symbols
\begin{equation}
\cg{j_1}{p_1}{j_2}{ p_2}{l}{k}={\sqrt{2l+1}}(-1)^{j_1-j_2+ k}{\begin{pmatrix} j_1&j_2&l \\ p_1&p_2&-k \end{pmatrix}}
\end{equation}
and the summation formula for the product of four Wigner 3j-symbols, as given for example in \cite{wolfr}. For  $\beta=0$, $j_1=j_2=j$  \eqn{npres} agrees with the expression found in \cite{vaidya01,chu01} for the fuzzy sphere. \par

Let us notice that the analysis of the IR behavior of the two-point non-planar graph we are considering is more complicated in the propagating base. This has already been considered in \cite{chu01} for the fuzzy sphere, and  their analysis extends trivially to our case. We shortly review it for completeness.
 Note first that when $l_1=0$, the Wigner 6j-symbol in \eqref{npres} can be simplified into
\begin{equation}
\left\{\begin{array}{ccc} j_1&j_1&0\\ j_1&j_1&l \end{array} \right\}={{1}\over{ 2j_1+1}}(-1)^{2j_1+l}
\end{equation}
which, combined with \eqref{npres} yields
\begin{equation}
{\tilde{\mathcal{A}}}^{{j_1,j_2,\ NP(\beta=0)}}_{l_1=0}:={\tilde{\mathcal{A}}}^{{j_1,j_2,\ NP(\beta=0)}}_{0,0;0,0}=
\frac{\kappa^4}{\lambda^2}\delta^{j_1 j_2}\sum_{l=0}^{2j_1}\frac{(2l+1)}{ \bigl(\alpha l(l+1)+ \frac{\kappa^4}
{\lambda^2}\mu^2\bigr)}={\tilde{\mathcal{A}}}^{{j_1 j_2 \,P\ (\beta=0)}}_{0,0;0,0}\label{comparplannonplan}
\end{equation}
where ${\tilde{\mathcal{A}}}^{{j_1 j_2 P\ (\beta=0)}}_{0,0;0,0}$ can be read off from \eqref{amplitplanarres}. Notice that this relation extends obviously to the case $\beta\ne0$. From \eqref{comparplannonplan}, one deduces that both planar and non-planar contributions for zero external momentum, $l_1=0$, have the same behavior. It is finite for finite $j$ while a logarithmic divergence appears at large $j$ since  $\lim_{j\to\infty}\sum_{l=l_0}^{2j}{{2l+1}\over{l(l+1)+\nu^2 }}\sim\lim_{j\to\infty}\int_{l_0}^{j} dx\ \frac{(2x+1)}{ \bigl(\alpha x(x+1)+ \frac{\kappa^4}
{\lambda^2}\mu^2\bigr)}\sim \lim_{j\to\infty}\log(j)$.\par

A rigorous analysis of the general case $l_1\ll j$ would require to make use of asymptotic for the Wigner 6j-symbols. Nevertheless, a reliable approximation can be obtained by using the Racah approximation for the Wigner 6j-symbols coefficients, as already used in \cite{chu01}
\begin{equation}
\left\{\begin{array}{ccc} j&j&l_1\\ j&j&l \end{array} \right\}\approx{{(-1)^{l_1+l+2j} }\over{2j}}P_{l_1}(1-
{{ l^2}\over{ 2j^2}})
\end{equation}
where $P_l(x)$ denotes the Legendre polynomial of order $l$. This approximation is accurate provided $l_1\ll j$ and $j\gg 1$. This yields
\begin{eqnarray}
{\tilde{\mathcal{A}}}^{{j,NP,\ (\beta=0)}}_{l_1k_1;l_2 k_2}&-&{\tilde{\mathcal{A}}}^{{j,P \ (\beta=0)}}_{l_1k_1;l_2 k_2}\nonumber\\
&\approx&
(-1)^{k_1}\delta_{k_1,k_2}\delta_{l_1,l_2}\frac{\kappa^4}{\lambda^2}\sum_{l=0}^{2j}\frac{(2l+1)}{ \bigl(\alpha l(l
+1)+ \frac{\kappa^4}{\lambda^2}\mu^2\bigr)}({{2j+1 }\over{ 2j}}P_{l_1}(1-{{ l^2}\over{ 2j^2}}) -1)\nonumber\\
&\approx& (-1)^{k_1}\delta_{k_1,k_2}\delta_{l_1,l_2}\frac{\kappa^4}{\lambda^2}\sum_{l=0}^{2j}\frac{(2l+1)}
{ \bigl(\alpha l(l+1)+ \frac{\kappa^4}{\lambda^2}\mu^2\bigr)}(P_{l_1}(1-{{ l^2}\over{ 2j^2}}) -1).
\label{approxmixing1}
\end{eqnarray}
By further assuming that $\frac{\kappa^2\mu}{\lambda}\ll j$, the sum in \eqref{approxmixing1} can be approximated by ($\varepsilon:={{1}\over{j}}$)
\begin{eqnarray}
\sum_{l=0}^{2j}\frac{(2l+1)}{ \bigl(\alpha l(l+1)+\frac{\kappa^4}{\lambda^2} \mu^2\bigr)}(P_{l_1}(1-
{{ l^2}\over{ 2j^2}}) -1)&\approx&{{1}\over{\alpha j}}\int_0^2du\ {{2u+\varepsilon }\over{u(u+\varepsilon)+
{{ m^2}\over{j^2 }} }}\big(P_{l_1}(1-{{ u^2}\over{2}})-1 \big)\nonumber\\
&\approx&{{1}\over{2\alpha j}}\int_{-1}^1{{dx}\over{1-x}}(P_{l_1}(x)-1)\nonumber\\
&=&-{{1}\over{\alpha j}}h(l_1)\label{aproxmixing2}
\end{eqnarray}
where $h(n):=\sum_{k=1}^nk^{-1}$, $h(0)=0$ denotes the harmonic number. Eqn. \eqref{aproxmixing2}  can be interpreted as the counterpart of the apparent logarithmic divergence appearing in the planar amplitude in the harmonic base. .

\subsection{A class of finite scalars models at $\alpha=0$}\label{finitemodels}
In this subsection, we set $\alpha=0$, $\beta\ne0$ in \eqref{lapl} and assume $\rho^2:=\frac{ \kappa^4\mu^2}{\lambda^2\beta }>0$. Then, the propagator \eqref{propagator1} simplifies into
\begin{equation}
P(\beta)^{j_1 j_2}_{p_1,\tilde p_1;p_2 \tilde p_2}:=(P(0,\beta))^{j_1 j_2}_{p_1,\tilde p_1;p_2 \tilde p_2}=\frac{ \kappa^4}{\lambda^2\beta } \;{{(-1)^{2j_1}}\over{j_1^2+\rho^2 }}\delta^{j_1 j_2}\delta_{\tilde p_1p_2 }\delta_{\tilde p_2p_1 },\ \forall j_i\in\frac{\mathbb{N}} {2} \label{propsimple}
\end{equation}
which can be easily obtained by using the properties of the fuzzy harmonics. This expression simplifies the computation of the amplitude of any diagram of arbitrary order since each sum over internal indices simply results in an overall factor $2j+1$. This will accordingly simplify the power counting analysis.\par

In order to prepare the ensuing discussion, we first consider the amplitude of the planar diagram for the 2-point function. From \eqref{amplanarbase} and \eqref{propsimple}, we find that the corresponding amplitude can be cast into the form
\begin{eqnarray}
\mathcal{A}^{{j_1 j_2}^P}_{p_1\tilde p_1;p_2\tilde p_2}(\alpha=0)&=&\delta^{j_1 j_2}\delta_{\tilde p_1 p_2}\sum_{\tilde p_3}  P(\beta)^{j_1}_{ p_1\tilde p_3\tilde p_3 \tilde p_2}={{ \kappa^4}\over{ \lambda^2\beta}}\delta^{j_1 j_2}\delta_{\tilde p_1 p_2}\delta_{p_1 \tilde p_2}(\sum_{\tilde p_3=-j_1}^{j_1}\delta_{\tilde p_3\tilde p_3}){{(-1)^{2j_1}}\over{j_1^2+\rho^2 }}\nonumber
\\
&=&{{\kappa^4}\over{ \lambda^2\beta}}\delta^{j_1 j_2}\delta_{\tilde p_1 p_2}\delta_{p_1 \tilde p_2}{{(-1)^{2j_1}(2j_1+1)}\over{j_1^2+\rho^2 }}\label{planarbeta0}.
\end{eqnarray}
In the same way, the amplitude for the nonplanar diagram is
\begin{equation}
\mathcal{A}^{{j_1 j_2}^{NP}}_{p_1\tilde p_1;p_2\tilde p_2}(\ \alpha=0)={{\kappa^4}\over{ \lambda^2\beta}}\delta^{j_1 j_2}\delta_{p_1 \tilde p_1} \delta_{p_2 \tilde p_2}{{(-1)^{2j_1}}\over{j_1^2+\rho^2 }}\label{nonplanarbeta0},
\end{equation}
which differs from \eqref{planarbeta0} by a factor $2j_1+1$ stemming from the sum over an internal index, that is an inner loop occurring in the planar amplitude. It can be readily verified that $\mathcal{A}_{p_1\tilde p_1;p_2\tilde p_2}^{{j_1 j_2}^{ P}}(\alpha=0)$ and ${{\mathcal{A}}^{j_1 j_2}}^{NP}_{p_1\tilde p_1;p_2\tilde p_2};(\alpha=0)$ are finite both for $j=0$ and $j\to\infty$ (recall $\rho^2\ne0$).\par

For this class of models the analysis of the degree of divergence may be carried out for a generic graph{\footnote{We are indebted to F. Vignes-Tourneret for this observation.}}.  Let ${\cal{A}}^j_{{\cal{D}}}$ denote the amplitude for an arbitrary ribbon diagram ${\cal{D}}$ with genus $g$ ($g$ is  the genus of the Riemann surface related to the diagram) and given $j$,  the momentum circulating in the diagram, which is conserved,. Recall that a ribbon graph is built from 2 lines (see e.g Fig.\ref{propfig}). The relevant topological properties of ${\cal{D}}$ are characterized (see e.g \cite{Grosse:2003aj}) by a set of integer numbers $(V,I,F,B)$ where $V$ and $I$ denote respectively the number of vertices and internal ribbon lines (counting the number of double lines propagators), $F$ denotes the number of faces (it can be determined simply by closing the external legs of ${\cal{D}}$ and counting the number of closed {\it{single}} lines) and the Euler characteristics $\chi$ of the related Riemann surface is
\begin{equation}
\chi:=2-2g=V-I+F\label{kig}
\end{equation}
Finally, $B$ is the number of boundary components which counts the number of closed lines having external legs. By noting that $F-B$ counts the number of internal summations, i.e inner loops, we can write (dropping the unessential overall constants)
\begin{equation}
|{\cal{A}}^j_{{\cal{D}}} |\le({{1}\over{j^2+\rho^2}})^{I}(2j+1)^{F-B}\label{upperboundbeta}.
\end{equation}
Since $\rho^2\ne0$, there is no singularity at $j=0$ while the finitude of ${\cal{A}}^j_{{\cal{D}}} $ \eqref{upperboundbeta} at $j\to\infty$ depends on the sign of
\begin{equation}
\omega({\cal{D}}):=2I+B-F=(I+B+V)+2g-2
\end{equation}
where \eqref{kig} has been used. This defines the power counting for the noncommutative scalar field theory at $\alpha=0$. Therefore the amplitude ${\cal{A}}^j_{{\cal{D}}} $ is finite provided
\begin{equation}
\omega({\cal{D}})\ge0,
\end{equation}
which holds true. This implies that the theory at $\alpha=0$ is finite.

\section{Discussion and conclusion}

Let us first summarize the main results of this paper. We have examined  a family of scalar NCFT on the noncommutative $\Rl$, a deformation of the Euclidean $\mathbb{R}^3$ through a noncommutative  associative product of Lie algebra type. We have constructed a natural matrix base adapted to $\Rl$. It involves the Wick-Voros symbols of the operators of the canonical base of $\bigoplus_{j\in { {{ {\mathbb{N}} }\over{2} }}}\mathbb{S}_j$. We have then considered a family of real-valued scalar actions with quartic interaction  on $\Rl$ whose  kinetic operator can be written as a linear combination of the square of the angular momentum and a part related to the Casimir operator of $\mathfrak{su(2)}$. Working in the natural matrix base, the action can be expressed as an infinite sum of scalar actions defined on the successive fuzzy spheres $\mathbb{S}_j$ that ``foliate'' the noncommutative space $\Rl$, with kinetic operator of Jacobi type. The computation of the propagator in this base, for which the interaction is diagonal, has been done and gives rise to a rather simple expression.

We have computed the planar and non-planar 1-loop contributions to the 2-point correlation function and examined their behavior. We find that they are finite for positive  $\alpha,\beta$. Moreover no singularities are found  in the external momenta (indices). This signals very likely the absence of UV/IR mixing that would destroy the perturbative renormalizability. In   the limit situation $\alpha=0$   we find that the resulting theory is finite to all orders in perturbation.

  From the dimensional properties of the kinetic operator (see remark below \eqref{diagdelta}), the region with low external indices
considered in the subsection \ref{nonplanarsection} corresponds naturally to the IR region (i.e low energy excitations from the kinetic operator).  As a conclusion, we do not expect that UV/IR mixing spoiling perturbative renormalizability shows up in the corresponding NCFT.

There are various potentially interesting directions which should be investigated. First, it is well known that  the commutative $\phi^4$ model is  super-renormalizable in three dimensions.  It would therefore be worthwhile to study  within the same scheme and with the same kinetic term as in this article, a model which is just renormalizable in the commutative framework, like the scalar $\phi^6$ model. Moreover, as we notice in the paper, the Laplacian we propose is proportional to the ordinary Laplacian times a factor of $x_0^2$ plus lower derivative terms. 
This is a natural one for $\R^3_\lambda$: it is constructed
in terms of derivations of the algebra, which are all inner,  supplemented by a  well defined  multiplicative operator.
A different proposal is suggested in \cite{presnajder} which is not based on derivations of the algebra.  This issue is presently under investigation, together with the analysis of the commutative limit.

The present analysis can be extended to the case of noncommutative gauge theories on $\Rl$ stemming from suitable versions of noncommutative differential calculus on $\Rl$. The resulting gauge-fixed actions have some features in common (but not all) with the scalar NCFT considered here. This is currently under study \cite{undertak1-2}.

 We remark finally that the associative product equipping $\Rl$ is rotationally but not translationally invariant. This, combined with the conclusion of this paper about the absence of dangerous UV/IR mixing seems to support the conjecture made in \cite{GLV09}, \cite{GLV08} relating translational invariance of the associative product to the possible occurrence of troublesome UV/IR mixing. This point must of course be clarified and deserves further investigation. \par

\setcounter{section}{0}

\appendix{Dual Hahn polynomials}

Dual Hahn polynomials are in the present framework  the counterpart of the Meixner polynomials at the root of the diagonalisation in \cite{Grosse:2003aj}.

Let us consider the expression of the kinetic action in the interaction base \eqn{kineticterm} which we report here for convenience
\beqa
(\Delta(\alpha,\beta)+\mu^2\mathbf{1})^j_{m_1\tilde m_1;m_2\tilde m_2}&=& \frac{1}{\lambda^2}\bigl\{
\delta_{\tilde m_1 m_2}\delta_{m_1 \tilde m_2} D^j_{m_2\tilde m_2}-\delta_{\tilde m_1, m_2+1}\delta_{m_1, \tilde m_2+1}B^j_{m_2,\tilde m_2} \nn\\
&-&\delta_{\tilde m_1, m_2-1}\delta_{m_1, \tilde m_2-1} H^j_{m_2,\tilde m_2}\,\bigr\} \label{kiac}
\eeqa
with
\beqa
D^j_{m_2\tilde m_2}&=&[({2\alpha}+{\beta})j^2 +  {2\alpha}(j- m_2 \tilde m_2) ] +\lambda^2{\mu^2} \\
B^j_{m_2\tilde m_2}&=& \alpha \sqrt{(j+m_2+1)(j-m_2)(j+\tilde m_2+1)(j-\tilde m_2)}\\
H^j_{m_2\tilde m_2}&=&  \alpha\sqrt{(j+m_2)(j-m_2+1)(j+\tilde m_2)(j-\tilde m_2+1)}.
\eeqa
To find the inverse of \eqn{kiac}
we go back to the $\R^4$ notation $v^j_{m_i\tilde m_i} \rightarrow v^{j}_{p_i \tilde p_i}$ with  $j+m_i=p_i$, $j+\tilde m_i=\tilde p_i$,  $j-m_i=q_i$, $j-\tilde m_i=\tilde q_i$ and  $i=1,2, p_i\ge 0$ (so that the indices are all  positive numbers).
The kinetic operator is then represented as
\beqa
(\Delta(\alpha,\beta)+\mu^2\mathbf{1})^j_{p_1\tilde p_1;p_2\tilde p_2}&=& \bigl\{
\delta_{\tilde p_1 p_2}\delta_{p_1 \tilde p_2} D^j_{p_2\tilde p_2}-\delta_{\tilde p_1, p_2+1}\delta_{p_1, \tilde p_2+1}B^j_{p_2,\tilde p_2} \nn\\
&-&\delta_{\tilde p_1, p_2-1}\delta_{p_1, \tilde p_2-1} H^j_{p_2,\tilde p_2}\,\bigr\}
\eeqa
with
\beqa
D^j_{p_2\tilde p_2}&=&\beta j^2+2\alpha j(1+p_2+\tilde p_2) -2\alpha p_2\tilde p_2 +\lambda^2{\mu^2} \\
B^j_{p_2\tilde p_2}&=& \alpha \sqrt{(p_2+1)(2j-p_2)(j+\tilde p_2+1)(2j-\tilde p_2)}\\
H^j_{p_2\tilde p_2}&=&  \alpha\sqrt{p_2(2j-p_2+1)\tilde p_2(2j-\tilde p_2+1)}.
\eeqa

We look for orthogonal polynomials which diagonalize the kinetic operator. We pose  $ \tilde p_1- p_1 =p_2-\tilde p_2= k$
so that
\be
(\Delta(\alpha,\beta)+\mu^2\mathbf{1})^j_{p_1\tilde p_1;p_2\tilde p_2}=(\Delta(\alpha,\beta)+\mu^2\mathbf{1})^j_{p_1, p_1+k ;\tilde p_2+k, \tilde p_2}
\ee
and we look for $U^{(j,k)}_{m i}$ such that
\be
(\Delta(\alpha,\beta)+\mu^2\mathbf{1})^j_{p_1, p_1+k ;\tilde p_2+k, \tilde p_2}
= \sum_i U_{p_1 i}^{k} \bigl(v_i+\mu^2\bigr)U_{i \tilde p_2}^{k} \label{eigenval}
\ee
with $v_i+\mu^2$ the eigenvalues of the kinetic operator and
\be
\sum U_{m i}^{(j,k)} U_{il}^{(j,k)}= \delta_{ml}. \label{orthonorm}
\ee
On multiplying on the left by $U_{p p_1}^{(j,k)}$ and summing over $p_1$ we arrive at
\be
U^{(j,k)}_{p\tilde p_2}(v_p) (D_{\tilde p_2,k+\tilde p_2}-v_p-\mu^2)- U^{(j,k)}_{p\tilde p_2+1}(v_p) B_{\tilde p_2+1,k+\tilde p_2+1}-
U^{(j,k)}_{p\tilde p_2-1}(v_p) H_{\tilde p_2-1,k+\tilde p_2-1}=0 \label{prehahneq}
\ee
We redefine
\be
U^{(j,k)}_{p\tilde p_2}(v_p) =f (p,k,N) \sqrt{\frac{(N-p_2)!p_2!}{\tilde p_2!(N-\tilde p_2)!}} V^{(N,k)}_{\tilde p_2}(v_p) \label{normfact}
\ee
with $N=2 j$ and $f$ a normalization factor.
We arrive at
\be
(D_{\tilde p_2-}v_p-\mu^2) V_{\tilde p_2}(v_p)-\alpha(N-\tilde p_2) (p_2+1) V_{\tilde p_2+1}(v_p) -\alpha \tilde p_2(N-p_2+1) V_{\tilde p_2-1}(v_p)=0 \label{hahneq}
\ee
On introducing
\be
B_{\tilde p_2}=-\alpha(N-\tilde p_2) (p_2+1)\, \;\;\; C_{\tilde p_2}= -\alpha\,\tilde p_2\,(N-p_2+1)
\ee
we have
\be
B_{\tilde p_2}+ C_{\tilde p_2} = - D_{\tilde p_2} +\frac{\beta}{4!}N^2+\mu^2
\ee
On redefining
\be
\tilde v_p= v_p -\frac{\beta}{4!} N^2-\mu^2
\ee
we finally obtain
\be
\bigl(-(B_{\tilde p_2}+ C_{\tilde p_2} ) -\tilde v_p \bigr) V_{\tilde p_2}(v_p) +B_{\tilde p_2}  V_{\tilde p_2+1}(v_p)+C_{\tilde p_2} V_{\tilde p_2-1}(v_p)=0
\ee
This is the equation satisfied by  a special class of orthogonal polynomials, the so called  {\it dual Hahn polynomials} \cite{kk}
 \be
 V_{\tilde p_2}(v_p)\equiv R_{\tilde p_2}(\lambda(p);\gamma,\delta,N)
 \ee
  with the identification $\tilde v_p=\lambda(p)= p(p+\gamma+\delta+1),\, \gamma=k, \,\delta=-k$.

  The dual Hahn polynomials are  given in terms of    hypergeometric functions as
  \be
  R_n(\lambda(x);\gamma,\delta,N) =\, {_3F_2}( -n,-x,x+\gamma+\delta+1; \gamma+1,-N;1) \;\;\;\; 0\le n\le N
  \ee
which are in turn defined in terms of Pochhammer-symbols. When one of the parameters in the first argument of the hypergeometric series is equal to a negative integer $-n$ the series becomes a finite sum, which is our case.    We refer to \cite{kk} for more details.

We have therefore obtained that the dual Hahn polynomials are the orthogonal polynomials which diagonalize the kinetic part of the action for our scalar model on $\R^3_\lambda$.

Dual Hahn polynomials and  fuzzy harmonics are indeed proportional. They are actually well known in nuclear physics and quantum chemistry, were they are also referred to as "discrete spherical harmonics" (see for example \cite{aquilanti}).

To clarify the relationship between dual Hahn polynomials and fuzzy harmonics let us reconsider Eq. \eqn{deltadeltadiag} where we map the kinetic action from the interaction to the propagating base, and let us multiply it by  $(Y^{j}_{l k})_{m_1\tilde m_1}$. For the sake of clarity we ignore the mass term. Upon summing over $m_1, \tilde m_1$ we obtain
\beqa
\Delta^j_{m_1\tilde m_1 m_2\tilde m_2}(Y^{j}_{l k})_{m_1\tilde m_1} &=& \frac{1}{2j+1}(-1)^{2j}\delta_{l l_1}\delta_{k k_1} (\Delta^j_{diag})_{l_1 k_1 l_2 k_2}( Y^j_{l_2 k_2})_{m_2\tilde m_2} \nn\\
&=&\frac{1}{\lambda^2}     \gamma(j,l;\alpha,\beta) ( Y^j_{l k})_{\tilde m_2 m_2}
\eeqa
which, on inserting the explicit form of $\Delta^j_{m_1\tilde m_1 m_2\tilde m_2}$, and recalling the relation \eqn{matrixelemfuzzy}, can be verified to be the  standard recurrence relation for Clebsch-Gordan coefficients
\begin{align}
[(j-m_2)(j-\tilde m_2)(j+m_2+1)(j+\tilde m_2+1)]^{ {{1}\over{2}}}(Y^j_{lk})_{m_2+1,\tilde m_2+1}\nonumber\\
+[(j+m_2)(j+\tilde m_2)(j-m_2+1)(j-\tilde m_2+1)]^{ {{1}\over{2}}}(Y^j_{lk})_{m_2-1,\tilde m_2-1}\nonumber\\
=[2j(j+1)-l(l+1)-2m_2\tilde m_2](Y^j_{lk})_{m_2,\tilde m_2},\ \forall j\in  {{ {\mathbb{N}} }\over{2} },\ -j\le m_2,\tilde m_2\le j\label{recurharmonic}.
\end{align}
But, as we have seen in Eq. \eqn{prehahneq}, this is also, up to the redefinition \eqn{normfact}, the recurrence relation for dual Hahn polynomials. To our knowledge this result has first appeared in \cite{smirnov}.  The precise relation  depends on the normalization chosen for the Hahn polynomials. Up to a normalization function which depends on $N, l, k$ we have
\be
R_{n}(\lambda(l);k,-k,\delta,N)= (-1)^{j-\tilde m_2}  \sqrt{\frac{(j+\tilde m_2)!(j-\tilde m_2)!}{(j+m_2)!(j-m_2)!}} (Y^j_{lk})_{m_2\tilde m_2}
\ee
with $n=\tilde p_2= j+\tilde m_2$, $k= m_2-\tilde m_2$, $N=2j$.

\vskip 2 true cm
\noindent
{\bf{Acknowledgments}}: We thank M. Dubois-Violette, H. Grosse, F. Lizzi and F. Vignes-Tourneret  for discussions and constructive comments. J.-C. W. thanks D. Blaschke and D. Perrot for discussions. P.V. is grateful to V. Rivasseau for exchanges at various stages of this work and for kind hospitality at LPT. She also   acknowledges a grant from the European Science Foundation
under the research networking project Quantum Geometry and Quantum Gravity, and partial
support by GDRE GREFI GENCO.

\end{document}